\documentclass[11pt]{article}
\usepackage{amsmath,amssymb,amsbsy,amsfonts,amsthm,latexsym,
               amsopn,amstext,amsxtra,euscript,amscd,color}
\usepackage{graphicx}
\usepackage{float}

\textheight=22.5cm
\textwidth=14.5cm
\oddsidemargin=0.5cm
\evensidemargin=-1cm
\topmargin=-0.8cm

\newcommand{\fun}[3]{{{#1}\,:\,{#2}\,\rightarrow\,{#3}}}

\newcommand{\Z}{{\mathbb Z}}

\newcommand{\F}{{\mathbb F}}

\newcommand{\U}{{\mathbb U}}

\newcommand{\cB}{{\mathcal B}}

\newcommand{\beq}{\begin{equation}}

%%%%%%%%%%%%%%%%%%%%%%%
\let\workingver=n
\def\begeq#1{\begin{equation}\mylabel{#1}}
\def\endeq{\end{equation}}
\def\begalg{\begin{alg}}
\def\endalg{\end{alg}}

\def\refeq#1{\if\workingver y(\ref{#1})-[[#1]]\else(\ref{#1})\fi}
\def\refth#1{\if\workingver y\ref{#1}-[[#1]]\else\ref{#1}\fi}
\def\mylabel#1{\if\workingver y\label{#1}{\bf\ \ [[#1]]\ \ }
\else\label{#1}\fi}
\def\mybibitem#1{\if\workingver y\bibitem{#1}{\bf\ \ [[#1]]\ \ }
\else\bibitem{#1}\fi}
%%%%%%%%%%%%%%%%%%%%%%%%%%%%%%%%%%%%%%%%%%%

\def\uu{{\mathbf u}}
\def\ww{{\mathbf w}}
\def\vv{{\mathbf v}}
\def\xx{{\mathbf x}}
\def\yy{{\mathbf y}}
\def\zz{{\mathbf z}}
\def\aa{{\mathbf a}}
\def\bb{{\mathbf b}}
\def\cc{{\mathbf c}}
\def\00{{\mathbf 0}}

\newcommand{\V}{\mathbb{V}}
\newcommand{\BBF}{\mathbb{F}}

\def\cB{{\mathcal B}}
\def\cC{{\mathcal C}}

\def\cH{{\mathcal H}}

\def\cW{{\mathcal W}}

\def\cGB{\mathcal{GB}}

\def\aa{{\bf a}}
\def\bb{{\bf b}}
\def\cc{{\bf c}}
\def\dd{{\bf d}}
\def\uu{{\bf u}}
\def\vv{{\bf v}}
\def\ww{{\bf w}}
\def\xx{{\bf x}}
\def\yy{{\bf y}}
\def\zz{{\bf z}}

\def\00{{\bf 0}}
\def\11{{\bf 1}}
\def\+{\oplus}

\def \F {{\mathbb F}}

\def \Z {{\mathbb Z}}

\def \V {{\mathbb V}}

\begin{document}
\newtheorem{thm}{Theorem}
\newtheorem{rem}[thm]{Remark}
\newtheorem{lem}[thm]{Lemma}
\newtheorem{exa}[thm]{Example}
\newtheorem{theorem}[thm]{Theorem}
\newtheorem{alg}[thm]{Algorithm}
\newtheorem{cor}[thm]{Corollary}
\newtheorem{conj}[thm]{Conjecture}
\newtheorem{prop}[thm]{Proposition}
\newtheorem{heu}[thm]{Heuristic}

\theoremstyle{definition}

\newtheorem{ex}[thm]{Example}

\theoremstyle{remark}
\newtheorem*{remark}{Remark}

\newtheorem*{ack}{Acknowledgement}

\title{\bf Generalized bent Boolean functions and strongly regular Cayley graphs  }
\author{Constanza Riera$^1$,  Pantelimon St\u anic\u a$^2$, Sugata Gangopadhyay$^3$,  \vspace{.4cm}\\
 $~^1$Department of Computing, Mathematics, and Physics,\\
 Western Norway University of Applied Sciences, \\
5020 Bergen, Norway; {\tt csr@hvl.no}\\
$~^2$Department of Applied Mathematics, \\
Naval Postgraduate School,\\
Monterey, CA 93943, USA; {\tt pstanica@nps.edu}\\
$~^3$Department of Computer Science and Engineering\\
Indian Institute of Technology Roorkee,\\
 Roorkee 247667, INDIA; {\tt gsugata@gmail.com}
}

%MSC: 11A41;11A51;11B39;11N05

\date{November 21, 2017}

\maketitle

\baselineskip=1.1\baselineskip

\begin{abstract}
In this paper we define the (edge-weighted) Cayley graph  associated to a generalized Boolean function, introduce a notion of strong regularity and give several of its properties.  We show some connections between this concept and generalized bent functions (gbent), that is, functions with flat Walsh-Hadamard spectrum. In particular, we find a complete characterization of quartic gbent functions in terms of the strong regularity  of their associated Cayley graph.
\end{abstract}

%%%%%%%%%%%%%%%%%%%%
%%  PAPER BEGINS  %%
%%%%%%%%%%%%%%%%%%%%

\section{(Generalized) Boolean functions background}
\label{intro}

Let $\V_n$ be the vector space of dimension
$n$ over the two element field $\BBF_2$, and for a positive integer $q$, let $\Z_q$ be the ring of integers modulo $q$.
Let us denote the addition, respectively, product operators over $\BBF_2$ by ``$\oplus$'', respectively, ``$\cdot$''.
A Boolean function $f$ on $n$ variables is a mapping
from $\V_n$ into $\BBF_2$,
that is, a multivariate polynomial over $\BBF_2$,
\begin{equation}
\label{eq:ANF}
f(x_1, \ldots, x_n)=a_0 \oplus \sum_{i=1}^{n} a_i x_i \oplus
\sum_{1 \leq i < j \leq n} a_{ij} x_i x_j \oplus  \ldots \oplus
a_{12\ldots n} x_1 x_2 \ldots x_n,
\end{equation}
where the coefficients
$a_0, a_i, a_{ij}, \ldots, a_{12\ldots n} \in \BBF_2$.
This representation of $f$ is called the {\em algebraic normal form} (ANF) of
$f$. The number of variables in the highest order product term with
nonzero coefficient is called the {\em algebraic degree}, or simply the
degree of $f$.

{F}or a Boolean function on $\V_n$,   the {\em Hamming weight} of $f$, $wt(f)$, is the cardinality of $\Omega_f=\{\xx\in \V_n\,:\, f(\xx)=1 \}$ (this is extended to any vector, by taking its weight to be the number of nonzero components of that vector).
%We denote by $\langle \Omega_f\rangle$  the space of the $0,1$ sequences generated by $\Omega_f$, and by
% $\dim\langle \Omega_f\rangle$ its dimension.
 The {\em Hamming distance} between two
functions $f,\, g:\V_n\to \F_2$ is $d(f,g)=wt(f\oplus g)$.
A Boolean function $f(\xx)$   is called an {\em affine function} if its algebraic degree is 1.
If, in addition, $a_0=0$ in \refeq{eq:ANF}, then $f(\xx)$ is  a {\em linear function} (see~\cite{CS17} for more on Boolean functions).
In $\V_n=\F_2^n$, the vector space of the $n$-tuples over $\F_2$, we use the conventional dot product $\uu\cdot\xx$ as an inner product.
%The standard inner product of $u,x\in\F_{2^n}$ is $\Tr(ux)$ (where ``$\Tr$'' stands for the absolute trace function over $\F_{2^n}$).
%Most of our results are in the vector space $\V_n = \F_2^n$, although, whenever applicable   we require the
%finite field environment $\V_n = \F_{2^n}$.

%Let $\Re(z)=\alpha$ and $\Im(z)=\beta$ be the real and imaginary parts  of a complex number $z=\alpha+\beta i$, respectively. 	I TOOK THIS OFF BECAUSE I DON'T THINK WE USE THIS.

For a {\it generalized Boolean function} $f:\V_n\to \Z_q$ we define the {\it generalized Walsh-Hadamard transform} to be the complex valued function
\[ 
\mathcal{H}^{(q)}_f(\uu) = \sum_{\xx\in \V_n}\zeta_q^{f(\xx)}(-1)^{\uu\cdot\xx}, 
\]
where $\zeta_q = e^{\frac{2\pi \imath}{q}}$ 
(we often use $\zeta$, $\cH_f$, instead of $\zeta_q$, respectively, $\cH_f^{(q)}$,  when $q$ is fixed).
The inverse is given by $\zeta^{f(\xx)}=2^{-n}\sum_{\uu} \cH_f(\uu)(-1)^{\uu\cdot\xx}$.
For $q=2$, we obtain the usual {\it Walsh-Hadamard transform}
\[
\mathcal{W}_f(\uu) = \sum_{\xx\in \V_n}(-1)^{f(\xx)}(-1)^{\uu\cdot\xx},
\]
which defines the coefficients of character form of $f$ with respect to the orthonormal basis of the
group characters
$\chi_\ww(\xx)=(-1)^{\ww\cdot \xx}$. In turn, $f(\xx)=2^{-n} \sum_\ww \cW_f(\ww) (-1)^{\uu\cdot\xx}$.

 We use the notation as in \cite{mmms17,mms0,mms,smgs,TXQF16} (see also~\cite{ST09,Tok}) and denote the set of all generalized Boolean functions by $\mathcal{GB}_n^q$ and when $q=2$, by $\mathcal{B}_n$.
A function $f:\V_n\rightarrow\Z_q$ is called {\em generalized bent} ({\em gbent}) if $|\mathcal{H}_f(\uu)| = 2^{n/2}$ for all $\uu\in \V_n$.
We recall that a function $f$ for which $|\mathcal{W}_f(\uu)| = 2^{n/2}$ for all $\uu\in \V_n$ is called a {\em bent} function, which only exist for even $n$ since
$\mathcal{W}_f(\uu)$ is an integer.
%Further, recall that $f\in\mathcal{B}_n$, $n$ odd, is called {\it semibent}
%if $|\mathcal{W}_f(\uu)| \in \{0,2^{(n+1)/2}\}$ for all $\uu\in \V_n$.
Let $f\in \mathcal{GB}_n^{q}$, where $2^{k-1}< q\leq 2^k$, then we can represent $f$ uniquely as
\[ f(\xx) = a_0(\xx) + 2a_1(\xx) + \cdots + 2^{k-1}a_{k-1}(\xx) \]
for some Boolean functions $a_i$, $0\le i\le k-1$ (this representation  comes from the binary representation of the elements in the image set $\Z_{2^k}$). For results on classical bent functions and related topics, the reader can consult~\cite{Bud14,CS17,MesnagerBook,Tok15}.

%Recall that the Hamming weight of bent functions is $2^{n-1}\pm 2^{n/2-1}$ ($n$ even), and
%they attain maximum nonlinearity, namely $2^{n-1}-2^{n/2-1}$ (cf. \cite{Di72}).

 \section{Unweighted strongly  regular graphs}

A graph is {\em regular of degree $r$}\index{Graph@!regular} (or $r$-regular) if
every vertex has degree $r$
(number of edges incident to it).
We say that an $r$-regular graph $G$
is a {\em strongly regular graph} (srg)\index{Graph@!strongly regular, srg} with parameters $(v,r,e,d)$ if
 there exist nonnegative
integers $e,d$ such that for all vertices $\uu,\vv$
the number of vertices adjacent to both $\uu,\vv$ is $d$, $e$, if $\uu,\vv$ are
adjacent, respectively, nonadjacent (see for instance \cite{CDS}).
The {\em complementary} graph $\bar G$ of the strongly regular graph
$G$ is also
strongly regular with parameters
$(v,v-r-1,v-2r+e-2,v-2r+d)$ (see \cite{CDS}).

Since the objects of this paper are edge-weighted graphs $G=(V,E,w)$ (with vertices $V$, edges $E$ and weight function $w$ defined on $E$ with values in some set, which in our case it will be either the set of integers modulo $q$, $\Z_q$ with $q=2^k$, or the complex numbers set $\mathbb{C}$), we define the {\em weighted degree} $d(v)$ of a vertex $v$ to be the sum of  the weights of its incident edges, that is, $\displaystyle d(v)=\sum_{u, (u,v)\in E} w(u,v)$ (later, we will introduce yet another degree or strength concept). Certainly, one can also define the {\em combinatorial degree} $r(v)$ of a vertex to be the number of such incident edges.
For more on graph theory the reader can consult~\cite{Biggs,CDS} or one's favorite graph theory book.

Let $f$ be a Boolean function on $\V_n$.
We define the {\em Cayley graph} of $f$ to be the graph $G_f=(\V_n,E_f)$
 whose vertex set is $\V_n$ and the set of edges is defined by
\[
E_f=\{(\ww,\uu)\in \V_n\times \V_n\,:\, f(\ww\oplus \uu)=1\}.
\]

 For some fixed (but understood from the context) positive integer $s$, let the  canonical injection  $\fun{\iota} {\V_s}{\Z_{2^s}}$  be defined by
$\iota(\cc) = \cc\cdot (1,2,\ldots,2^{s-1})=\sum_{j=0}^{s-1}c_j 2^j$, where $\cc=(c_0,c_1,\dots,c_{s-1})$. For easy writing, we denote by ${\bf j}:=\iota^{-1}(j)$.

The adjacency matrix $A_f$ is the matrix whose entries are $A_{i,j}=f({\bf i}\oplus {\bf j})$ (here $\iota$ is defined on $\V_n$).
% where $\vec{a}$  is the binary representation (written as a vector of length $n$) of any $a\in \{0,\ldots,2^n-1\}$.
It is simple to prove that $A_f$ has the dyadic property: $A_{i,j}=A_{i+2^{k-1},j+2^{k-1}}$.
Also, from its definition, we derive that $G_f$ is a {\em regular graph of degree
$wt(f)=|\Omega_f|$}
(see \cite[Chapter 3]{CDS} for further definitions).

Given a graph $f$ and its adjacency matrix $A$, the {\em spectrum}, with notation $Spec(G_f)$, is the
 set of eigenvalues of $A$ (called also the eigenvalues of $G_f$).
We assume throughout that $G_f$ is connected (in fact, one can show that all connected components of $G_f$ are isomorphic).

  It is known (see \cite[pp. 194--195]{CDS})
that a connected $r$-regular graph is strongly regular iff it has exactly
three distinct eigenvalues
$\lambda_0=r,\lambda_1,\lambda_2$ (so $e=r+\lambda_1\lambda_2+\lambda_1+\lambda_2$,
$d=r+\lambda_1\lambda_2$).

The following result is known \cite[Th. 3.32, p. 103]{CDS} (the second part follows from a counting argument and is also well known).
\begin{prop}
\label{prop1}
The following identity holds for a strongly $r$-regular graph:
\[
A^2=(d-e)A+(r-e)I+eJ,
\]
where $J$ is the all $1$ matrix. Moreover,
$
r(r-d-1)=e(v-r-1).
$
\end{prop}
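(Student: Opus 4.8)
The plan is to prove the matrix identity by an entrywise computation and then to obtain the scalar identity by contracting it against the all-ones vector (equivalently, by the short double count alluded to in the statement). For the matrix identity I would use the standard fact that for a simple graph with adjacency matrix $A$ the $(\uu,\vv)$-entry of $A^2$ equals $\sum_{\ww}A_{\uu\ww}A_{\ww\vv}$, the number of common neighbours of $\uu$ and $\vv$. The strong regularity hypothesis then gives three cases: $(A^2)_{\uu\uu}=r$ on the diagonal; $(A^2)_{\uu\vv}=d$ for $\uu\neq\vv$ adjacent; and $(A^2)_{\uu\vv}=e$ for $\uu\neq\vv$ nonadjacent. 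Seeking scalars $\alpha,\beta,\gamma$ with $A^2=\alpha A+\beta I+\gamma J$ and matching these three entry types (using that $A$ has zero diagonal): the nonadjacent off-diagonal entries force $\gamma=e$, the adjacent off-diagonal entries force $\alpha+\gamma=d$ so $\alpha=d-e$, and the diagonal entries force $\beta+\gamma=r$ so $\beta=r-e$. This is exactly $A^2=(d-e)A+(r-e)I+eJ$, the cited classical identity \cite{CDS}.

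For the scalar identity I would apply both sides of the matrix identity to the all-ones vector $\mathbf{1}$. Using $A\mathbf{1}=r\mathbf{1}$, $I\mathbf{1}=\mathbf{1}$ and $J\mathbf{1}=v\mathbf{1}$ gives $r^2=(d-e)r+(r-e)+ev$, and rearranging yields $r(r-d-1)=e(v-r-1)$. The equivalent direct count fixes a vertex $\uu$, writes $N$ for its $r$ neighbours and $M$ for the remaining $v-1-r$ vertices (those different from $\uu$ and nonadjacent to $\uu$), and counts the edges between $N$ and $M$ in two ways: each $\ww\in N$ is adjacent to $\uu$ and to exactly $d$ vertices of $N$ (its common neighbours with $\uu$, all lying in $N$), hence sends $r-1-d$ edges into $M$, for a total of $r(r-1-d)$; each $\xx\in M$ has exactly $e$ common neighbours with $\uu$, all in $N$, for a total of $(v-1-r)e$; equating the two counts gives the identity.

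I do not anticipate a genuine obstacle: both parts are classical and the proof is pure bookkeeping. The only points needing care are keeping straight which parameter governs the adjacent ($d$) versus the nonadjacent ($e$) case, and noting that every common neighbour of $\uu$ and another vertex is in particular adjacent to $\uu$ and hence lies in $N(\uu)$ — which is precisely what makes both local counts expressible through $d,e,r,v$ alone.
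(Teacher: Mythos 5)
Your proof is correct, and it is essentially the standard argument: the paper itself gives no proof of this proposition, citing \cite{CDS} for the matrix identity and noting that the second part ``follows from a counting argument.'' Your entrywise determination of the coefficients of $A$, $I$, $J$ and your double count of the edges between the neighbourhood of a fixed vertex and its complement are exactly the intended arguments; indeed, the latter is the same double count the paper carries out explicitly when it proves the weighted analogue for $(X;X)$-strongly regular graphs later on.
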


In~\cite{BC1,BCV} it was shown that a Boolean function $f$ is  bent   if and only if the Cayley graph $G_f$ is strongly regular with $e=d$. We shall refer to this as the Bernasconi-Codenotti correspondence.

\section{The Cayley graph of a generalized Boolean function}

We now let $f:\V_n\to\Z_q$ be a generalized Boolean function. We define the {\em (generalized) Cayley graph} $G_f$ to be the graph where vertices are the elements of $\V_n$ and two vertices $\uu,\vv$ are connected by a weighted edge of (multiplicative) weight $\zeta^{f(\uu\oplus\vv)}$ (respectively, additive weight $f(\uu\oplus\vv)$). Certainly, the underlying unweighted graph is a complete pseudograph (every vertex also has a  loop). We sketch in Figure~\ref{gcayley1} such an example.

\begin{figure}[tbh!]
\begin{center}
\includegraphics[width=.8\textwidth]{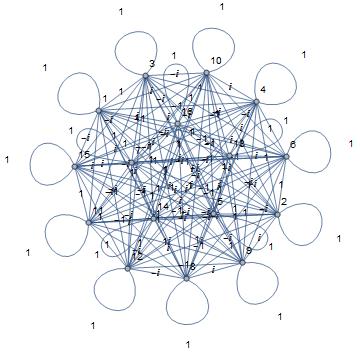}
\end{center}
\caption{Cayley graph associated to the gbent $f(\xx)=x_1  +2 (x_1 x_2 \oplus x_3 x_4 )$}
\label{gcayley1}
\end{figure}

Certainly, one can define a modified (generalized) Cayley graph $G'_f$ where two vertices are connected if and only if $f(\uu\oplus\vv)\neq 0$ with weights given by $\zeta^{f(\uu\oplus\vv)}$. We sketch in Figure~\ref{mcayley1} such a graph (it is ultimately the above graph with all weight 1 edges removed).
\begin{figure}[tbh!]
\begin{center}
\includegraphics[width=.6\textwidth]{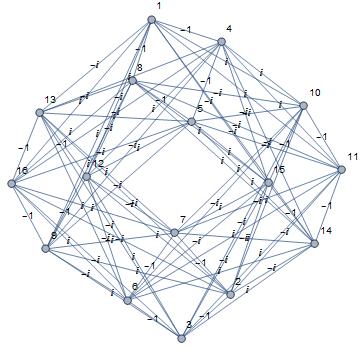}
\end{center}
\caption{Modified Cayley graph associated to gbent $f(\xx)=x_1  +2 (x_1 x_2 \oplus x_3 x_4 )$}
\label{mcayley1}
\end{figure}

In Example~\ref{exp2}, we give an example of a generalized Cayley graph, and its spectrum.
\begin{exa}
 \label{exp2} Let $f:{\mathbb V_n}\rightarrow {\mathbb Z}_4$ defined by $f(x_1,x_2)=x_1x_2+2x_1$. The truth table is $(0\ 0\ 2\ 3)^T$ $($using the lexicographical order $x_1, x_2$$)$. Then, the adjacency matrix (with multiplicative weights) is
$$ A_f=\left(\begin{array}{cccc}
1&1&-1&-i\\
1&1&-i&-1\\
-1&-i&1&1\\
-i&-1&1&1
\end{array}\right).$$
A basis for its eigenspace is $\{\vec{v}_1,\vec{v}_2,\vec{v}_3,\vec{v}_4\}$, where $\vec{v}_1=(1\ 1 \ 1\ 1)^T$ with $\chi_1(\xx)=(-1)^0$ $\vec{v}_2=(1\ -1 \ 1\ -1)^T$ with $\chi_2(\xx)=(-1)^{x_2}$, $\vec{v}_3=(1\ 1 \ -1\ -1)^T$ with $\chi_3(\xx)=(-1)^{x_1}$, $\vec{v}_4=(1\ -1 \ -1\ 1)^T$ with $\chi_4(\xx)=(-1)^{x_1+x_2}$, having respective eigenvalues $\lambda_0=1-i,\,\lambda_1=-1+i,\,\lambda_2=3+i,\,\lambda_3=1-i$.
We can see that the eigenvalues $A_f$ are
\begin{align*}
\lambda_0&=i^0\chi_1(00)+i^0\chi_1(01)+i^2\chi_1(10)+i^3\chi_1(11)=1+1+i^2+i^3=1-i={\cal{H}}_f^{(4)}({\bf 0}),\\
\lambda_1&=i^0\chi_2(00)+i^0\chi_2(01)+i^2\chi_2(10)+i^3\chi_2(11)=1-1+i^2-i^3=-1+i={\cal{H}}_f^{(4)}({\bf 1}),\\
\lambda_2&=i^0\chi_3(00)+i^0\chi_3(01)+i^2\chi_3(10)+i^3\chi_3(11)=1+1-i^2-i^3=3+i={\cal{H}}_f^{(4)}({\bf 2}),\\
\lambda_3&=i^0\chi_4(00)+i^0\chi_4(01)+i^2\chi_4(10)+i^3\chi_4(11)=1-1-i^2+i^3=1-i={\cal{H}}_f^{(4)}({\bf 3}).
\end{align*}
\end{exa}

Although, we do not use it in this paper, we define the {\em strength} of the vertex $\aa$ in the Cayley graph $G_f$  as the sum of the additive weights of incident edges, that is, $s(\aa)=\sum_b f(\aa\oplus \bb)$.

\begin{rem}
If $f\in\cGB_n^q$ and $G_f$ is its Cayley graph, we observe that all vertices are adjacent of multiplicative $($respectively, additive$)$ weights  in $\U_q=\{1,\zeta,\zeta^2,\ldots,\zeta^{q-1}\}$ $($respectively, in $\mathbb{Z}_q=\{0,1\ldots,q-1\}$$)$.
\end{rem}

We next show that the eigenvalues of the Cayley graph $G_f$ (with multiplicative weights) are precisely the (generalized) Walsh-Hadamard coefficients.
\begin{thm}
Let $f:\V_n\to\Z_q$, $q=2^k$, and let $\lambda_i, 0\leq i\leq 2^n-1$ be the eigenvalues of its associated (multiplicative) edge-weighted graph $G_f$. Then,
\[
\lambda_i=  \cH_f^{(q)}({\bf i})\ (\text{recall that  ${\bf i}=\iota^{-1}(i)$}).
\]
\end{thm}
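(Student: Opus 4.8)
The plan is to diagonalize the adjacency matrix $A_f$ explicitly using the characters of the group $\V_n=\F_2^n$, exactly as illustrated in Example~\ref{exp2}. First I would recall that the rows and columns of $A_f$ are indexed by $\V_n$ via the bijection $\iota$, with entry $(A_f)_{\ww,\uu}=\zeta^{f(\ww\oplus\uu)}$. The natural candidate eigenvectors are the character vectors $\chi_\vv=\bigl((-1)^{\vv\cdot\uu}\bigr)_{\uu\in\V_n}$ for $\vv\in\V_n$; these are $2^n$ pairwise orthogonal vectors, hence a basis of $\C^{2^n}$, so it suffices to check that each is an eigenvector and to compute its eigenvalue.

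The key computation is to evaluate $(A_f\,\chi_\vv)_\ww=\sum_{\uu\in\V_n}\zeta^{f(\ww\oplus\uu)}(-1)^{\vv\cdot\uu}$. Substituting $\xx=\ww\oplus\uu$ (so $\uu=\ww\oplus\xx$ and $\vv\cdot\uu=\vv\cdot\ww\oplus\vv\cdot\xx$, which gives $(-1)^{\vv\cdot\uu}=(-1)^{\vv\cdot\ww}(-1)^{\vv\cdot\xx}$), the sum becomes
\[
(A_f\,\chi_\vv)_\ww=(-1)^{\vv\cdot\ww}\sum_{\xx\in\V_n}\zeta^{f(\xx)}(-1)^{\vv\cdot\xx}=\cH_f^{(q)}(\vv)\,(\chi_\vv)_\ww,
\]
using the definition of the generalized Walsh-Hadamard transform. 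Hence $\chi_\vv$ is an eigenvector of $A_f$ with eigenvalue $\cH_f^{(q)}(\vv)$. Letting $\vv={\bf i}=\iota^{-1}(i)$ range over $\V_n$ yields all $2^n$ eigenvalues, so $\lambda_i=\cH_f^{(q)}({\bf i})$, as claimed.

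There is no serious obstacle here; the only point requiring a word of care is that $A_f$ is in general not symmetric (and not Hermitian) when $q>2$, since $\zeta^{f(\ww\oplus\uu)}$ need not equal its conjugate, so one cannot simply invoke the spectral theorem to assert diagonalizability. This is handled by the explicit argument above: the characters $\chi_\vv$ form a full orthogonal basis of eigenvectors regardless of symmetry, so $A_f$ is diagonalizable and its spectrum is exactly $\{\cH_f^{(q)}({\bf i}) : 0\le i\le 2^n-1\}$, counted with multiplicity. (One may also note in passing that this recovers the classical case $q=2$, where $A_f$ is symmetric and the eigenvalues are the ordinary Walsh-Hadamard coefficients.)
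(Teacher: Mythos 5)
Your proof is correct and follows essentially the same route as the paper: both show that the character vectors $\chi_\vv$ are eigenvectors of $A_f$ by the substitution $\xx=\ww\oplus\uu$ (the paper writes this as grouping the sum over the level sets $S_k$ of $f$ and using $\chi({\bf i}\oplus{\bf s}_k)=\chi({\bf i})\chi({\bf s}_k)$), and identify the resulting eigenvalue with $\cH_f^{(q)}(\vv)$. Your explicit remark that the $2^n$ orthogonal character vectors form a full eigenbasis, so that diagonalizability does not rely on $A_f$ being Hermitian, is a point the paper leaves implicit.
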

\begin{proof}
 Let $\chi:{\mathbb V}_n\rightarrow {\mathbb C}$ be a character of $\V_n$, and for each such character, let ${\bf x}_\chi=\left(x_j\right)_{0\leq j\leq 2^n-1}\in{\mathbb C}^{2^n}$, where $x_j=\chi({\bf j})$. We claim (and show) that ${\bf x}_\chi$ is an eigenvector of $A=A_f$ (for simplicity, we use $A$ in lieu of $A_f$ in this proof), with eigenvalue $\displaystyle \sum_{k=0}^{q-1}\sum_{{\bf s_k}\in S_k}\zeta^k \chi({\bf s_k})$, where $S_k=\{{\bf s}_k: f({\bf s}_k)=k\}$.
(Observe that the characters of ${\mathbb V}_n$ are $\chi_{{\bf w}}({\bf x})=(-1)^{\uu\cdot\xx}$, and thus the eigenvalues are exactly the Walsh--Hadamard transform coefficients).

The $i$-th entry of $A{\bf x}$ is
$$(A{\bf x})_i=\sum_j A_{i,j}x_j=\sum_j A_{i,j}\chi({\bf  j})=\sum_{k=0}^{q-1}\sum_{{\bf i}\oplus{\bf j}\in S_k}\zeta^k\chi({\bf j})$$
If ${\bf i}\oplus{\bf j}\in S_k$, then ${\bf i}\oplus{\bf j}={\bf s}_k$, for some ${\bf s}_k\in S_k$, and so, ${\bf j}={\bf i}\oplus{\bf s}_k$. Since $\chi$ is a character, $$\chi({\bf j})=\chi({\bf i}\oplus{\bf s}_k)=\chi({\bf i})\chi({\bf s}_k)=x_i\chi({\bf s}_k)$$
Then,
$$(A{\bf x})_i=\sum_{k=0}^{q-1}\sum_{{\bf s}_k\in S_k}\zeta^k x_i\chi({\bf s}_k)=x_i\sum_{k=0}^{q-1}\sum_{{\bf s}_k\in S_k}\zeta^k \chi({\bf s}_k),$$
which shows our theorem.
\end{proof}

%First, observe that the adjacency matrix $A_f=\left( \zeta^{f(\uu\oplus \vv)}\right)_{\uu,\vv}$ of $G_f$ is a block circulant matrix of the form $A_f=\begin{pmatrix} A & B \\ B & A \end{pmatrix}$. The eigenvalues are found by computing the roots of
%\begin{align*}
%p_f(\lambda)&=\det (A_f-\lambda I_{2^n})=\det \begin{pmatrix}A-\lambda I_{2^{n-1}} & B\\ B & A-\lambda I_{2^{n-1}}\end{pmatrix} \\
%&=\det (A-B-\lambda I_{2^{n-1}} )\det(A+B-\lambda I_{2^{n-1}} ).
%\end{align*}
%Therefore the eigenvalues of $A_f$ are exactly the eigenvalues of $A-B,A+B$.
%
%We shall now concentrate on the eigenvalues of $A-B=\left( \right)_{\uu,\vv}$ and $A+B$.
%

\section{Generalized bents and their Cayley graphs}

We recall that a {\em $q$-Butson Hadamard matrix}~\cite{Bu62} ($q$-BH) of dimension $d$ is a $d \times d$
matrix $H$ with all entries $q$-th roots of unity such that
$HH^* = d I_d$, where $H^*$ is the conjugate transpose of $H$. When $q = 2$, $q$-BH matrices are called Hadamard matrices (where the
entries are $\pm 1$).
Recall that the {\em crosscorrelation} function is defined by $$\cC_{f,g}(\zz)=\sum_{\xx \in \V_n} \zeta^{f(\xx)  - g(\xx \+ \zz)},$$
and the {\em autocorrelation} of $f \in \cGB_n^q$ at $\uu \in \V_n$
is $\cC_{f,f}(\uu)$ above, which we denote by $\cC_f(\uu)$.

\begin{thm}
Let $f\in\cGB_n^q$. Then  $f$ is gbent if and only if the adjacency matrix $A_f$ of the (multiplicative) edge-weighted  Cayley graph associated to  $f$ is a $q$-Butson Hadamard matrix.
\end{thm}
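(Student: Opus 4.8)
The plan is to diagonalize $A_f$ using the theorem just proved and then compute $A_fA_f^*$ directly. I begin with two elementary observations. Since $f$ takes values in $\Z_q$, every entry $\zeta_q^{f(\mathbf{i}\oplus\mathbf{j})}$ of $A_f$ is a $q$-th root of unity, and this holds irrespective of whether $f$ is gbent; hence the only condition that must be verified for $A_f$ to be a $q$-BH matrix is the relation $A_fA_f^*=2^nI$. Also, $A_f$ is (complex) symmetric because $f(\mathbf{i}\oplus\mathbf{j})=f(\mathbf{j}\oplus\mathbf{i})$, so $A_f^{\,T}=A_f$ and therefore $A_f^*=\overline{A_f}$.

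Next I bring in the $2^n\times 2^n$ Sylvester--Hadamard matrix $H=\bigl((-1)^{\mathbf{i}\cdot\mathbf{j}}\bigr)_{0\le i,j\le 2^n-1}$, whose columns are precisely the character vectors $\mathbf{x}_{\chi_{\mathbf{w}}}$ used in the proof of that theorem. These $2^n$ vectors are pairwise orthogonal (the usual character sum $\sum_{\mathbf{j}}(-1)^{(\mathbf{w}\oplus\mathbf{w}')\cdot\mathbf{j}}=2^n\delta_{\mathbf{w},\mathbf{w}'}$), hence form a basis of $\mathbb{C}^{2^n}$, so $H$ diagonalizes $A_f$: with $D=\mathrm{diag}\bigl(\cH_f^{(q)}(\mathbf{i})\bigr)_{0\le i\le 2^n-1}$ the diagonal matrix of eigenvalues furnished by that theorem, we have $A_fH=HD$. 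Since $H$ is real, symmetric and satisfies $H^2=2^nI$, we get $H^{-1}=2^{-n}H$ and hence $A_f=2^{-n}HDH$.

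What remains is a short computation. Using $A_f^*=\overline{A_f}=2^{-n}H\overline{D}H$ (here $H$ is real) together with $H^2=2^nI$,
\[
A_fA_f^*=2^{-2n}\,HDH\cdot H\overline{D}H=2^{-2n}\,HD(2^nI)\overline{D}H=2^{-n}\,H\,(D\overline{D})\,H,
\]
where $D\overline{D}=\mathrm{diag}\bigl(|\cH_f^{(q)}(\mathbf{i})|^2\bigr)_{0\le i\le 2^n-1}$. If $f$ is gbent, then $|\cH_f^{(q)}(\mathbf{i})|^2=2^n$ for all $i$, i.e. $D\overline{D}=2^nI$, so $A_fA_f^*=2^{-n}H(2^nI)H=H^2=2^nI$ and $A_f$ is a $q$-BH matrix. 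Conversely, if $A_fA_f^*=2^nI$, then $H(D\overline{D})H=2^{2n}I$; multiplying on the left and on the right by $H^{-1}=2^{-n}H$ gives $D\overline{D}=2^{-2n}H(2^{2n}I)H=H^2=2^nI$, so every $|\cH_f^{(q)}(\mathbf{i})|^2$ equals $2^n$ and $f$ is gbent.

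Everything is routine once the diagonalization is set up; the one point that requires attention is that $A_f$ is symmetric but not Hermitian, so the conjugate must be carried on $D$ (and not on $H$) throughout, and one must recall that the Butson condition involves $A_f^*$, not $A_f^2$. As a sanity check, for $q=2$ (where $\zeta_2=-1$ and $A_f$ is real) the argument collapses to $A_f^2=2^nI$, recovering the classical fact that $f$ is bent if and only if the $\pm1$ matrix $\bigl((-1)^{f(\mathbf{i}\oplus\mathbf{j})}\bigr)$ is a Hadamard matrix.
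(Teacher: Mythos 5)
Your proof is correct, and it takes a genuinely different route from the paper's. The paper works entrywise: it identifies $\left(A_f\cdot \bar A_f\right)_{\aa,\bb}$ with the autocorrelation $\cC_f(\aa\+\bb)$, converts that to $2^{-n}\sum_{\xx}\|\cH_f(\xx)\|^2(-1)^{(\aa\+\bb)\cdot\xx}$ via the crosscorrelation--Walsh identities, and then uses Parseval for the diagonal and (for the converse) a citation to an external characterization of gbent functions via autocorrelation. You instead exploit the preceding spectral theorem directly: since the character vectors form the columns of the Sylvester--Hadamard matrix $H$ with $H^2=2^nI$, you get the clean factorization $A_f=2^{-n}HDH$ with $D=\mathrm{diag}\bigl(\cH_f^{(q)}(\mathbf{i})\bigr)$, and the Butson condition $A_fA_f^*=2^nI$ becomes exactly $D\bar D=2^nI$, i.e.\ flatness of the spectrum. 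Your version buys two things: the converse is self-contained (just conjugate by $H^{-1}=2^{-n}H$) rather than outsourced to the literature, and the correct handling of $A_f^*$ versus $A_f^2$ for a symmetric but non-Hermitian matrix is made explicit. The paper's version buys a conceptual dividend of its own, namely the identity $\left(A_f\cdot\bar A_f\right)_{\aa,\bb}=\cC_f(\aa\+\bb)$, which ties the graph to the autocorrelation function and is of independent interest. Both arguments are complete and correct.
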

\begin{proof}
Let $A_f=\left(\zeta^{f(\aa+\bb)} \right)_{\aa,\bb}$. Then, the $(\aa,\bb)$-entry of $A_f\cdot \bar A_f$ is
\begin{align}
\label{eq2}
\left(A_f\cdot \bar A_f\right)_{\aa,\bb}=\sum_{\cc\in\V_n} \zeta^{f(\aa\+\cc)}\zeta^{\bar f(\cc\+\bb)}=\sum_{\cc\in\V_n} \zeta^{f(\aa\+\cc)- f(\cc\+\bb)}=\cC_{f}(\aa\+\bb).
\end{align}
Now, recall from~\cite{smgs} that if $f,g\in\cGB_n^q$, then
\begin{equation*}
\label{eq3}
\begin{split}
\sum_{\uu \in \V_n}\cC_{f,g}(\uu)(-1)^{\uu\cdot\xx} = 2^{-n} \cH_f(\xx)\overline{\cH_g(\xx)}, \\
\cC_{f,g}(\uu) = 2^{-n}\sum_{\xx \in \V_n}\cH_f(\xx)\overline{\cH_g(\xx)} (-1)^{\uu\cdot\xx}.
\end{split}
\end{equation*}
Thus, equation~\eqref{eq2} becomes
\[
\left(A_f\cdot \bar A_f\right)_{\aa,\bb}=2^{-n}\sum_{\xx \in \V_n}\cH_f(\xx)\overline{\cH_f(\xx)} (-1)^{(\aa\+\bb)\cdot \xx}=
2^{-n}\sum_{\xx \in \V_n} \|\cH_f(\xx)\|^2 (-1)^{(\aa\+\bb)\cdot \xx}.
\]
By Parseval's identity,  if $\aa=\bb$, then $\left(A_f\cdot \bar A_f\right)_{\aa,\aa}=2^n$. Assume now that $\aa\neq \bb$ and we shall show that  $\left(A_f\cdot \bar A_f\right)_{\aa,\bb}=0$ for some $\aa\neq\bb$ if and only if $f$ is gbent. Certainly, if $f$ is gbent then $\|\cH_f(\xx)\|^2=2^n$, and since $\sum_{\xx \in \V_n}  (-1)^{(\aa\+\bb)\cdot \xx}=0$, for $\aa\+\bb\neq 0$, we have that implication.
We can certainly show it directly, but the converse follows from~\cite[Theorem 1 ($iv$)]{smgs}.
\end{proof}

{\em For the remaining of the paper, for simplicity, we shall only consider additive weights, namely, our edge-weighted graphs $(V,E,w)$ will have the weight function $w:E\to\mathbb{Z}_q$, $q=2^k$.}

Next, we say that a weighted graph $G=(V,E,w)$, $V\subseteq \V_n$, $w:E\to \Z_q$, $q=2^k$, is a  {\em weighted regular graph} (wrg) of parameters $(v;r_0,r_1,\ldots,r_{q-1})$ if every vertex will have exactly $r_j$ neighbors of edge weight $j$. We denote by $N_j(\aa)$ the set of all neighbors of a vertex $\aa$ of corresponding edge weight $j$.

\begin{prop}
\label{prop6}
Given a generalized Boolean function $f\in\cGB_n^q$, the associated Cayley graph is weighted regular (of some parameters), that is, every vertex will have the same number of incident edges  with a fixed weight.
\end{prop}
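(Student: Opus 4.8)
The plan is to reduce the statement to the observation that the Cayley graph of $f$ is vertex-transitive under the translation action of $\V_n$ on itself, so that the local edge-weight statistics at one vertex must coincide with those at every other vertex. First I would fix a weight $j\in\Z_q$ and a vertex $\aa\in\V_n$, and count $|N_j(\aa)|$, the number of vertices $\bb$ with $w(\aa,\bb)=j$, i.e.\ with $f(\aa\oplus\bb)=j$. Substituting $\cc=\aa\oplus\bb$ (a bijection of $\V_n$ as $\bb$ ranges over $\V_n$), this count equals $|\{\cc\in\V_n:\ f(\cc)=j\}|=|S_j|$, which is manifestly independent of $\aa$. Hence every vertex has exactly $r_j:=|S_j|$ incident edges of weight $j$, and $G_f$ is a weighted regular graph of parameters $\bigl(2^n;\,r_0,r_1,\ldots,r_{q-1}\bigr)$ with $r_j=|\{\cc\in\V_n: f(\cc)=j\}|=|f^{-1}(j)|$.

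The one small point worth spelling out is that we should check $\sum_{j=0}^{q-1} r_j = 2^n$, which is immediate since the sets $S_j=f^{-1}(j)$ partition $\V_n$; this confirms the combinatorial degree (counting loops, which occur when $\cc=\00$ contributes to $S_{f(\00)}$) is $2^n$ at every vertex, as it must be for the complete pseudograph underlying $G_f$. If one prefers to phrase the argument via the adjacency matrix, one can instead invoke the dyadic/Cayley structure: the entry $A_{i,j}=f({\bf i}\oplus{\bf j})$ depends only on ${\bf i}\oplus{\bf j}$, so each row of $A_f$ is a permutation of the fixed multiset $\{f(\cc):\cc\in\V_n\}$, and therefore each row contains the value $j$ exactly $|f^{-1}(j)|$ times; this is the matrix-level restatement of vertex-transitivity.

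There is essentially no obstacle here: the result is the additive-weight analogue of the elementary fact that an ordinary Cayley graph $G_f$ of a Boolean function is $wt(f)$-regular, and the same translation-invariance argument applies verbatim, now bookkeeping the full fibre partition $\{f^{-1}(j)\}_{j\in\Z_q}$ rather than just $\Omega_f=f^{-1}(1)$. The only thing to be careful about is the convention on loops (whether one includes the edge $(\aa,\aa)$ of weight $f(\00)$ in the count), but since the excerpt explicitly treats $G_f$ as a complete pseudograph with a loop at every vertex, the clean statement is simply $r_j=|f^{-1}(j)|$ for all $j\in\Z_q$, including the loop contribution to $r_{f(\00)}$.
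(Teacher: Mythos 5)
Your proof is correct and takes essentially the same approach as the paper: both arguments rest on the translation bijection $\bb\mapsto\aa\oplus\bb$ (equivalently, mapping solutions of $f(\xx_0\oplus\yy)=j$ to those of $f(\xx_1\oplus\yy)=j$ by $\yy\mapsto\yy\oplus\xx_0\oplus\xx_1$), showing the number of weight-$j$ neighbors is independent of the vertex. Your version additionally makes the parameters explicit as $r_j=|f^{-1}(j)|$, which the paper leaves implicit.
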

\begin{proof}
Fix a weight $j$ and a vertex $\xx_0$, and consider the equation $f(\xx_0\oplus \yy)=j$ with solutions $\yy_1,\yy_2,\ldots,\yy_t$, say. For any other vertex $\xx_1$, the equation $f(\xx_1\oplus \yy)=j$ will have solutions $\yy_1\oplus \xx_1\oplus \xx_0,\yy_2\oplus \xx_1\oplus \xx_0, \ldots, \yy_t\oplus \xx_1\oplus \xx_0$. The proof of the lemma is done.
\end{proof}

We will define our first concept of strong regularity here.
Let   $X,\bar X$ be a fixed bisection of  the weights $\Z_q=X\cup \bar X, X\cap \bar X=\emptyset, |X|=|\bar X|=2^{k-1}$, and let $Y\subseteq \Z_q$. We say that a weighted regular (of parameters $(v;r_0,r_1,\ldots,r_{q-1})$) graph $G=(V,E,w)$, $V\subseteq \V_n$, $w:E\to \Z_q$, $q=2^k$,  is a (generalized) {\em $(X;Y)$-strongly regular} (srg) of parameters $(v;r_0,r_1,\ldots,r_{q-1};e_X,d_X)$ if and only if the number of vertices $\cc$ adjacent to both  $\aa,\bb$, with $w(\aa,\cc)\in Y,w(\bb,\cc)\in Y$, is exactly $e_X$ if  $w(\aa,\bb)\in X$, respectively, $d_X$ if  $w(\aa,\bb)\in \bar X$. One can weaken the condition and define a $(X_1,X_2;Y)$-srg notion, where $X_1\cap X_2=\emptyset$, not necessarily a bisection, and require the number of vertices $\cc$ adjacent to both  $\aa,\bb$, with $w(\aa,\cc)\in Y,w(\bb,\cc)\in Y$, to be exactly $e_X$ if  $w(\aa,\bb)\in X_1$, respectively, $d_X$ if  $w(\aa,\bb)\in  X_2$; or even
  allowing a multi-section, and all of these variations can be fresh areas of research for graph theory experts.

 Note that our definition (see also~\cite{CJMPW} for an alternative concept, which we mention in the last section) is a natural extension of the classical definition: Let $q=2$, and $X=\{1\}$. A classical strongly regular graph is then  equivalent to an $(X;X)$-strongly regular graph.

 We first show that (part of) Proposition~\ref{prop1} can be adapted to this notion, as well, in some cases, and we deal below with one such instance.
 \begin{prop}
Let $G=(V,E,w)$ be  a weighted $(X;X)$-strongly regular graph of parameters $(v;r_0,r_1,\ldots,r_{q-1}; e_X,d_X)$, where $X\subseteq \mathbb{Z}_q$, $v=|V|$. Then,
\[
r_X(r_X-e_X-1)=d_X(v-r_X-1),
\]
where $r_X=\sum_{i\in X} r_i$.
\end{prop}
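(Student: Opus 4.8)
The plan is to mimic the classical double-counting argument behind the second identity in Proposition~\ref{prop1} ($r(r-d-1)=e(v-r-1)$), carried out inside the "heavy" part of the graph. Fix a vertex $\aa$. Since $G$ is weighted regular of parameters $(v;r_0,\ldots,r_{q-1})$, the set of neighbours of $\aa$ whose incident edge weight lies in $X$ has size exactly $r_X=\sum_{i\in X}r_i$, independently of $\aa$ (this uses Proposition~\ref{prop6}, which guarantees the $r_j$ are well defined). Call this set $N_X(\aa)$, and call the complementary set of neighbours (edge weight in $\bar X$, where here $\bar X$ denotes the complement, though note the hypothesis only uses $X$ on both sides) the "light" neighbours; there are $v-1-r_X$ of them, since the underlying graph is complete on $V$ (every $\cc\neq\aa$ is adjacent to $\aa$), minus $\aa$ itself. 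The point is that $G$ being $(X;X)$-strongly regular means: for $\aa,\bb$ with $w(\aa,\bb)\in X$ the number of common neighbours $\cc$ with both $w(\aa,\cc)\in X$ and $w(\bb,\cc)\in X$ equals $e_X$, and when $w(\aa,\bb)\notin X$ this count equals $d_X$.

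Next I would count, in two ways, the number of ordered pairs $(\bb,\cc)$ such that $\bb\in N_X(\aa)$, $\cc\in N_X(\aa)$, $\cc\neq\bb$, and $w(\bb,\cc)\in X$ — that is, ordered "$X$-edges within the $X$-neighbourhood of $\aa$." Fixing $\bb\in N_X(\aa)$ first: $\bb$ itself is an $X$-neighbour of $\aa$, so $w(\aa,\bb)\in X$, and the number of $\cc\in N_X(\aa)$ with $w(\bb,\cc)\in X$ is, by the $(X;X)$-srg hypothesis (the "$e_X$" case), exactly $e_X$; but this count includes $\cc$ ranging over common $X$-neighbours of $\aa$ and $\bb$, and we must decide whether $\cc=\bb$ is possible — it is not, since that would require $w(\bb,\bb)\in X$, a loop weight, which I will handle by the convention already in force (the diagonal entry is $f(\00)$; one simply has to track whether $f(\00)\in X$, exactly as the "$-1$" appears in the classical formula). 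So each of the $r_X$ choices of $\bb$ contributes $e_X$ (after the loop adjustment), giving $r_X(e_X - \varepsilon)$ for an explicit small correction $\varepsilon\in\{0,1\}$ matching the "$-1$" in the statement. Counting instead by fixing $\cc$ first among the $X$-neighbours of $\aa$ and asking for $\bb\in N_X(\aa)$ with $w(\bb,\cc)\in X$, one gets the same total; the genuinely new information comes from a third count.

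The real engine is to count ordered triples differently: count pairs $(\bb,\cc)$ with $\bb\in N_X(\aa)$, $\cc\notin N_X(\aa)$, $\cc\neq\aa$, and $w(\bb,\cc)\in X$. Fixing such a "light" vertex $\cc$: since $w(\aa,\cc)\notin X$ (i.e. it lies in $\bar X$), the srg hypothesis in the "$d_X$" case says the number of $\bb$ with $w(\aa,\bb)\in X$ and $w(\bb,\cc)\in X$ is exactly $d_X$; there are $v-1-r_X$ such $\cc$, giving $d_X(v-1-r_X)$. On the other hand, fixing $\bb\in N_X(\aa)$ first: the total number of $X$-neighbours of $\bb$ is $r_X$; among these, one is $\aa$ itself (since $w(\aa,\bb)\in X$, by symmetry of the weight function $w(\bb,\aa)=w(\aa,\bb)\in X$), and the rest split into those lying in $N_X(\aa)$ and those not; the number lying in $N_X(\aa)$ is $e_X$ minus the same loop correction as above. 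Hence the number of $X$-neighbours of $\bb$ outside $N_X(\aa)\cup\{\aa\}$ is $r_X - 1 - (e_X-\varepsilon)$-ish, and summing over the $r_X$ choices of $\bb$ gives $r_X(r_X-1-e_X)$ after the bookkeeping. Equating the two counts yields $r_X(r_X-e_X-1)=d_X(v-r_X-1)$, as claimed.

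The main obstacle is purely the bookkeeping of loops and the diagonal: because the underlying "generalized Cayley graph" is a complete pseudograph (every vertex carries a loop, of weight $f(\00)$), one must be careful about (a) whether $\aa$ itself is counted as a neighbour of $\bb$, (b) whether $\bb$ is counted as a common neighbour of $\aa$ and $\bb$, and (c) whether $w(\aa,\bb)$ and $w(\bb,\cc)$ being in $X$ ever accidentally refers to a loop. Each of these produces an off-by-one that I expect to reassemble exactly into the "$-1$"s printed in the identity — indeed the statement's shape $r_X(r_X-e_X-1)=d_X(v-r_X-1)$ is precisely what one obtains from the classical argument with these corrections, so I am confident the adjustments cancel cleanly; the only care needed is to state the convention once (e.g. whether $0\in X$, equivalently whether the loop weight lies in $X$) and apply it consistently. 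Everything else is a two-way count of $X$-weighted edges between $N_X(\aa)$ and its complement.
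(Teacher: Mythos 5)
Your ``real engine'' paragraph is exactly the paper's proof: fix a vertex $\aa$, split the remaining vertices into the $r_X$ vertices joined to $\aa$ by an $X$-weight edge and the $v-r_X-1$ others, and double-count the $X$-weight edges between the two parts, getting $r_X(r_X-e_X-1)$ one way and $d_X(v-r_X-1)$ the other. The preliminary count of $X$-edges inside $N_X(\aa)$ is superfluous (as you note, it yields nothing new), and the hedged loop-correction $\varepsilon$ is not actually needed once one fixes the standard convention that a vertex is not its own neighbour --- the ``$-1$'' comes solely from excluding $\aa$ itself from the $X$-neighbours of $\bb$.
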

\begin{proof}
Without loss of generality we assume that the weights are additive, that is, they belong to $\Z_q$. Fix a vertex $\uu\in V$ and let $A$ be the set of vertices adjacent to $\uu$ with connecting edges of weight in $X$, and $B=V\setminus \{A,\uu\}$. Observe that $|A|=\sum_{i\in X} r_i=r_X$ and $|B|= v-r_X-1$. We somewhat follow the combinatorial method of the classical case, and we shall count the number of vertices between $A$ and $B$ in two different ways. For any vertex $\aa\in A$, there are exactly $e_X$ vertices in $A$ adjacent to both $\uu,\aa$ of edge weights in $X$, and so, exactly $r_X-e_X-1$ neighbors in $B$ whose connecting edges have weight in $X$. Therefore, the number of edges of weight in $X$ between $A$ and $B$ is $r_X(r_X-e_X-1)$.

 On the other hand, any vertex $\bb\in B$ is adjacent to $d_X$ vertices in $A$ of connecting edge with weight in $X$  (since $\uu,\bb$ must share $d_X$ common vertices of connecting edges of weight in $X$) and so, the total number of edges of weight in $X$ between $A$ and $B$ is $d_X(v-r_X-1)$. The proposition follows.
\end{proof}

Let $G=(V,E,w)$ ($w:E\to \Z_q$) be a weighted graph, where $w(E)\subseteq\Z_q$ (or $w(E)\subseteq \U_q$). We define the {\em complement} of $G$, denoted by $\bar G$ the graph of vertex set $V$ with an edge between two vertices $\aa,\bb$ having weight $q-1-f(\aa\+\bb)$ (or, multiplicatively, $\zeta^{q-1-f(\aa\+\bb)}$. This is a natural definition, since if $G$ is the Cayley graph associated to $f=a_0+2a_1, a_0,a_i\in\cB_n$, then we observe that $\bar G$ is the Cayley graph associated to $\bar f=\bar a_0+2\bar a_1+\cdots +2^{k-1} \bar a_{k-1}$, where $\bar a_i$ is the binary complement of $a_i$ (that follows from $2^k-1-f=(1-a_0)+2(1-a_1)+\cdots+2^{k-1} (1-a_{k-1})=\bar a_0+2\bar a_1+\cdots+2^{k-1} \bar a_{k-1}$).
%We will phrase the next theorem additively, for easy writing.

\begin{lem}
\label{lem9}
Let $G=(V,E,w)$  $(w:E\to \Z_q)$ be a weighted regular graph of parameters $(v;r_0,r_1,\ldots,r_{q-1})$. Then the complement $\bar G$ is a weighted regular graph of parameters $(v;\bar r_0,\ldots,\bar r_{q-1})$, where $\bar r_{q-1-j}=r_{j}$.
\end{lem}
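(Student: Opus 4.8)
The plan is to unwind the definition of the complement $\bar G$ and show that the edge-weight structure is simply a relabeling of the weights under the map $j\mapsto q-1-j$. Recall that $\bar G$ has the same vertex set $V\subseteq\V_n$ as $G$, and for any two vertices $\aa,\bb$ the edge $(\aa,\bb)$ carries weight $q-1-f(\aa\oplus\bb)$, where $f$ is the generalized Boolean function whose Cayley graph is $G$ (equivalently, $w(\aa,\bb)=f(\aa\oplus\bb)$). Since $j\mapsto q-1-j$ is a bijection of $\Z_q$ onto itself, an edge $(\aa,\bb)$ has weight $j$ in $G$ if and only if it has weight $q-1-j$ in $\bar G$.

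The key step is then a counting observation at a fixed vertex. Fix a vertex $\aa\in V$ and a weight $i\in\Z_q$. The set of neighbors of $\aa$ in $\bar G$ joined by an edge of weight $i$ is exactly $\{\bb : q-1-f(\aa\oplus\bb)=i\}=\{\bb : f(\aa\oplus\bb)=q-1-i\}=N_{q-1-i}(\aa)$, the set of neighbors of $\aa$ in $G$ of weight $q-1-i$. Since $G$ is weighted regular of parameters $(v;r_0,\ldots,r_{q-1})$, we have $|N_{q-1-i}(\aa)|=r_{q-1-i}$, and this value does not depend on the choice of $\aa$. Hence in $\bar G$ every vertex has exactly $r_{q-1-i}$ incident edges of weight $i$; setting $\bar r_i:=r_{q-1-i}$ (equivalently $\bar r_{q-1-j}=r_j$) shows that $\bar G$ is weighted regular of parameters $(v;\bar r_0,\ldots,\bar r_{q-1})$, as claimed.

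There is no real obstacle here; the only thing to be slightly careful about is the presence of loops (the Cayley graph is a complete pseudograph, with a loop at each vertex of weight $f(\00)$), but since the loop at $\aa$ in $\bar G$ simply has weight $q-1-f(\00)$, it is counted consistently in the appropriate $N_i(\aa)$ on both sides, and the regularity bookkeeping is unaffected (alternatively, one restricts the counting to $\bb\neq\aa$ uniformly, which again only shifts all the $r_i$ by a fixed amount depending on $f(\00)$). I would therefore present the argument in the three short moves above: (i) the weight-relabeling bijection, (ii) the identification $N_i^{\bar G}(\aa)=N_{q-1-i}^{G}(\aa)$, and (iii) reading off $\bar r_i=r_{q-1-i}$ from the weighted regularity of $G$.
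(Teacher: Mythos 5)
Your proposal is correct and follows essentially the same route as the paper's proof: fix a vertex, observe that the complement relabels each edge weight $j$ as $q-1-j$, and conclude $\bar r_{q-1-j}=r_j$ from the weighted regularity of $G$. The extra remark about loops is a harmless refinement the paper does not bother with.
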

\begin{proof}
  Let $\aa$ be an arbitrary vertex.
Recall that  we denote by $N_j(\aa)$ the set of all neighbors of a vertex $\aa$ of corresponding edge weight $j$.
Since $G$ is weighted regular, then $|N_j(\aa)|=r_j$.  In the graph $\bar G$, the weight $j$ will transform into $q-1-j$, therefore $\bar r_{q-1-j}=r_j$ and the lemma is shown.
\end{proof}

Let $A\subset B$ and $x\in B$. As it is customary, we will denote by $x+A$ the set $\{x+a : a\in A\}$.

\begin{thm}
Let $G=(V,E,w)$  ($V\subseteq \BBF_2^n$, $w:E\to \Z_q$) be an $(X;Y)$-strongly regular, for some $X,Y\subseteq \Z_q$ with $|X|=2^{k-1}, q=2^k$, of parameters $(v;r_0,r_1,\ldots,r_{q-1};e_X,d_X)$ such that $q-1-X=X$ or $\bar X$, and $q-1-Y=Y$. Then, the complement $\bar G$ is a $(q-1-X;Y)$-strongly regular graph of parameters $(v;\bar r_0,\ldots,\bar r_{q-1};\bar e_{q-1-X},\bar d_{q-1-X})$, where $\bar r_{q-1-j}=r_{j}$,
$\bar e_{q-1-X}=e_X$ and $\bar d_{q-1-X}=d_X$, if $q-1-X= X$, respectively, $\bar r_{q-1-j}=r_{j}$, $\bar e_{q-1-X}=d_X$ and $\bar d_{q-1-X}=e_X$, if $q-1-X= \bar X$.
\end{thm}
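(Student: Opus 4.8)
The plan is to push everything back to the $(X;Y)$-strong regularity of $G$ itself, exploiting that $j\mapsto q-1-j$ is an involution of $\Z_q$. The weighted‑regular part of the conclusion — that $\bar G$ is weighted regular with $\bar r_{q-1-j}=r_j$ — is exactly Lemma~\ref{lem9} applied to $G$ (an $(X;Y)$-srg being, by definition, weighted regular), so the only thing left is to verify the strong‑regularity clause for $\bar G$ and to read off $\bar e_{q-1-X}$ and $\bar d_{q-1-X}$.

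First I would fix two vertices $\aa,\bb\in V$ and write $\bar w=q-1-w$ for the weight function of $\bar G$. The crucial elementary remark is that, for every $\cc$,
\[
\bar w(\aa,\cc)\in Y \iff q-1-w(\aa,\cc)\in Y \iff w(\aa,\cc)\in q-1-Y = Y ,
\]
the last equality being the hypothesis $q-1-Y=Y$, and similarly for $\bb$. Hence the set of common neighbors $\cc$ of $\aa,\bb$ in $\bar G$ whose two incident weights lie in $Y$ is literally the same set as in $G$, so the two counts coincide. By the $(X;Y)$-strong regularity of $G$, that common count equals $e_X$ if $w(\aa,\bb)\in X$ and $d_X$ if $w(\aa,\bb)\in\bar X$.

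It then remains to re‑express this in terms of $\bar w(\aa,\bb)=q-1-w(\aa,\bb)$. Since $j\mapsto q-1-j$ is an involution carrying $X$ onto $q-1-X$ and $\bar X$ onto $\overline{q-1-X}=q-1-\bar X$, and since $|q-1-X|=|X|=2^{k-1}$, the pair $\{q-1-X,\overline{q-1-X}\}$ is again a bisection of $\Z_q$; by the hypothesis $q-1-X\in\{X,\bar X\}$ it is in fact the original bisection $\{X,\bar X\}$, with its two halves either fixed or interchanged. If $q-1-X=X$ the halves are fixed, so $w(\aa,\bb)\in X\iff\bar w(\aa,\bb)\in X$, and a pair of $\bar G$ has common count $e_X$ precisely when $\bar w(\aa,\bb)\in X$ and $d_X$ precisely when $\bar w(\aa,\bb)\in\bar X$; this is the asserted $\bar e_{q-1-X}=e_X$, $\bar d_{q-1-X}=d_X$. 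If $q-1-X=\bar X$ the halves are interchanged, so $w(\aa,\bb)\in X\iff\bar w(\aa,\bb)\in\bar X$, and now a pair of $\bar G$ has common count $d_X$ when $\bar w(\aa,\bb)\in X$ and $e_X$ when $\bar w(\aa,\bb)\in\bar X$; matching these against the parameters of a $(q-1-X;Y)$-srg gives $\bar e_{q-1-X}=d_X$, $\bar d_{q-1-X}=e_X$. In both cases $\bar G$ is a $(q-1-X;Y)$-srg with the claimed parameters.

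The step needing the most care is this last one — not for any conceptual reason, but because of the bookkeeping: one must track the three applications of the involution $j\mapsto q-1-j$ (to $X$, to $\bar X$ and to $Y$), check that $q-1-X$ together with its complement is indeed a bisection, and then correctly attach the values $e_X,d_X$ to the two halves of that bisection in $\bar G$. Everything else is a direct unwinding of the definitions together with Lemma~\ref{lem9}.
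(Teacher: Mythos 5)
Your proof is correct and follows essentially the same route as the paper's: Lemma~\ref{lem9} for the weighted-regularity part, followed by a two-case analysis according to whether $q-1-X$ equals $X$ or $\bar X$, with the parameters fixed in the first case and interchanged in the second. You are in fact more explicit than the paper about the key point that the hypothesis $q-1-Y=Y$ makes the set of common $Y$-neighbours of any pair of vertices literally the same in $G$ and $\bar G$, a fact the paper's proof uses only implicitly.
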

\begin{proof}
 The first claim follows from Lemma~\ref{lem9}. We consider the two cases $q-1-X=X$, or $\bar X$, separately. As before, for any two vertices $\aa,\bb$ we denote by $N_Y(\aa,\bb)$ the set of all vertices $\cc$ adjacent to both $\aa,\bb$ such that $w(\aa,\cc)\in Y, w(\bb,\cc)\in Y$.

\noindent {\em Case $1$.} Let $q-1-X=X$. For any two vertices $\aa,\bb$ with $w(\aa,\bb)\in  X$, then
$
|N_Y(\aa,\bb)|=e_X,
$
since the weight of the edge between $\aa,\bb$ remains in $X$. Similarly, for two vertices $\aa,\bb$ with $w(\aa,\bb)\in  \bar X$, then $|N_Y(\aa,\bb)|=d_X$.

\noindent {\em Case $2$.} Let $q-1-X=X$. For any two vertices $\aa,\bb$ with $w(\aa,\bb)\in  X$, then the weight of the edge between $\aa,\bb$ in $\bar G$ is now in $\bar X$, and we know that in that case $N_Y(\aa,\bb)=d_X$. Similarly, for two vertices $\aa,\bb$ with $w(\aa,\bb)\in  \bar X$, then $|N_Y(\aa,\bb)|=e_X$.
\end{proof}

In the next theorem, we shall show a  strong regularity theorem (a Bernasconi-Codenotti correspondence) for gbents  $f\in\cGB_n^4$ when $n$ even and $k=2$.
For two vertices $\aa,\bb$ of the associated Cayley graph, for $i,j\in\{0,1,2,3\}$, let $N_{\{i,j\}}(\aa,\bb)$ be the set of all ``neighbor'' vertices $\ww$ to both $\aa,\bb$ such that the edges have additive weights $f(\ww\+\aa)\in\{i,j\},f(\ww\+\bb)\in\{i,j\}$.
\begin{thm}
\label{GB4}
Let $f\in\cGB_n^4$, $n$ even. Then $f$ is gbent if and only if the associated generalized Cayley graph is $(X;\bar X)$-strongly regular with $e_X=d_X$, for both $X=\{0,1\}$, and $X=\{0,3\}$,
 that is, if and only if the following two conditions are satisfied:
\begin{enumerate}
\item[$(i)$] For any two pairs of vertices $\{\aa,\bb\}$, $\{\cc,\dd\}$, %with $f(\aa\+\bb)\in\{0,1\}$, $f(\cc\+\dd)\in\{2,3\}$,
then   $|N_{\{2,3\}}(\aa,\bb)|=|N_{\{2,3\}}(\cc,\dd)|$.
\item[$(ii)$] For any two pairs of vertices $\{\aa,\bb\}$, $\{\cc,\dd\}$, %with $f(\aa\+\bb)\in\{0,3\}$, $f(\cc\+\dd)\in\{1,2\}$,
then  $|N_{\{1,2\}}(\aa,\bb)|=|N_{\{1,2\}}(\cc,\dd)|$.
\end{enumerate}
\end{thm}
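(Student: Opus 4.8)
The plan is to connect the gbent property of $f$ to a Butson-Hadamard condition on a cleverly chosen auxiliary (ordinary) adjacency matrix, and then to translate that matrix identity into the two counting conditions $(i)$ and $(ii)$. First I would recall the binary decomposition $f = a_0 + 2a_1$ with $a_0,a_1\in\cB_n$, so that $\zeta_4^{f} = i^{a_0+2a_1} = i^{a_0}(-1)^{a_1}$. The key observation is that being gbent with $q=4$ is (by known results, e.g.\ the results cited from \cite{smgs} and the characterization used in Theorem above that $A_f$ is a $4$-Butson matrix) equivalent to a pair of conditions on the component functions $a_1$ and $a_0\oplus a_1$; indeed $\cH_f^{(4)}(\uu) = \cW_{a_1}(\uu) + i\,\cW_{a_0\oplus a_1}(\uu)$, so $|\cH_f^{(4)}(\uu)|^2 = \cW_{a_1}(\uu)^2 + \cW_{a_0\oplus a_1}(\uu)^2 = 2^n$ for all $\uu$. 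The two index sets $X=\{0,1\}$ and $X=\{0,3\}$ are chosen precisely so that the ``weight lies in $\bar X=\{2,3\}$'' (resp.\ $\bar X=\{1,2\}$) event isolates, via the map $f\mapsto a_1$ (resp.\ $f\mapsto a_0\oplus a_1$), the support of one Boolean component: note $f(\zz)\in\{2,3\}\iff a_1(\zz)=1$ and $f(\zz)\in\{1,2\}\iff (a_0\oplus a_1)(\zz)=1$.

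Next, the counting quantity $|N_{\{2,3\}}(\aa,\bb)|$ is exactly the number of $\ww$ with $a_1(\ww\oplus\aa)=1$ and $a_1(\ww\oplus\bb)=1$, which is $(A_{a_1}^2)_{\aa,\bb}$ where $A_{a_1}$ is the ordinary $0/1$ Cayley adjacency matrix of the Boolean function $a_1$. By Proposition~\ref{prop1} and the Bernasconi--Codenotti correspondence, the condition that this quantity depends only on whether $\aa=\bb$, $\aa\ne\bb$ in the required way — here the $(X;\bar X)$-srg condition with $e_X=d_X$ forces $|N_{\{2,3\}}(\aa,\bb)|$ to be a single constant for all pairs of distinct vertices — is equivalent to $G_{a_1}$ being a strongly regular graph with $e=d$, i.e.\ to $a_1$ being bent. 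Likewise condition $(ii)$ is equivalent to $a_0\oplus a_1$ being bent. So the whole theorem reduces to the assertion: $f=a_0+2a_1$ is gbent over $\Z_4$ (with $n$ even) $\iff$ both $a_1$ and $a_0\oplus a_1$ are bent. That equivalence is exactly the known structural characterization of $\Z_4$-gbent functions for even $n$ (it follows from $|\cH_f^{(4)}(\uu)|^2 = \cW_{a_1}(\uu)^2 + \cW_{a_0\oplus a_1}(\uu)^2$ together with Parseval: $\sum_\uu \cW_{a_1}(\uu)^2 = \sum_\uu \cW_{a_0\oplus a_1}(\uu)^2 = 2^{2n}$, so the sum of two nonnegative integer sequences each with average $2^n$ is constantly $2^n$ iff each is constantly $2^n$; the ``$\Leftarrow$'' direction is immediate, and the ``$\Rightarrow$'' direction uses that bentness of $a_1$ forces $\cW_{a_1}(\uu)^2=2^n$, hence $\cW_{a_0\oplus a_1}(\uu)^2=2^n$ too — but one must first argue $a_1$ is bent, which follows because $\cW_{a_1}(\uu)^2\le 2^n$ for all $\uu$ with equality on average forcing equality everywhere only after noting $\cW_{a_1}(\uu)^2$ and $\cW_{a_0\oplus a_1}(\uu)^2$ are both at most $2^n$; this last clamp is the subtle point).

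Concretely, I would carry out the steps in this order: $(1)$ write $f=a_0+2a_1$ and record $\cH_f^{(4)}(\uu)=\cW_{a_1}(\uu)+i\,\cW_{a_0\oplus a_1}(\uu)$; $(2)$ observe $f(\zz)\in\{2,3\}\iff a_1(\zz)=1$ and $f(\zz)\in\{1,2\}\iff a_0(\zz)\oplus a_1(\zz)=1$, so that $|N_{\{2,3\}}(\aa,\bb)|=(A_{a_1}^2)_{\aa,\bb}$ and $|N_{\{1,2\}}(\aa,\bb)|=(A_{a_0\oplus a_1}^2)_{\aa,\bb}$; $(3)$ interpret the $(X;\bar X)$-srg-with-$e_X=d_X$ hypothesis for $X=\{0,1\}$ (resp.\ $X=\{0,3\}$) as the statement that $(A_{a_1}^2)_{\aa,\bb}$ (resp.\ $(A_{a_0\oplus a_1}^2)_{\aa,\bb}$) is constant over all pairs $\aa\ne\bb$, which by Proposition~\ref{prop1} and the Bernasconi--Codenotti correspondence means $a_1$ (resp.\ $a_0\oplus a_1$) is bent; $(4)$ invoke the equivalence ``$f$ gbent $\iff$ $a_1$ and $a_0\oplus a_1$ both bent'' via the displayed modulus-squared identity and Parseval, proving both directions.

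The main obstacle I anticipate is Step $(4)$'s forward direction — showing that gbentness of $f$ forces \emph{each} of $a_1$, $a_0\oplus a_1$ to be individually bent, not merely that the sum of squared Walsh coefficients is $2^{n+1}$ pointwise. The clean way around this is to note $|\cH_f^{(4)}(\uu)|^2=2^n$ means $\cW_{a_1}(\uu)^2+\cW_{a_0\oplus a_1}(\uu)^2=2^n$ for every $\uu$; since $n$ is even both Walsh coefficients are even integers, and one shows (a short parity/averaging argument, or by citing the known classification of $\Z_4$-gbent functions for even $n$, e.g.\ from \cite{smgs} or \cite{TXQF16}) that the only way a sum of two squares of even integers is constantly $2^n$ with both marginal sums equal to $2^{2n}$ is for each square to be constantly $2^n$. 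A secondary, more bookkeeping-type obstacle is making sure the dictionary in Step $(3)$ is exactly right: the $(X;\bar X)$-srg definition distinguishes the case $w(\aa,\bb)\in X$ from $w(\aa,\bb)\in\bar X$, but with $e_X=d_X$ these collapse, and together with the loop at $\aa=\aa$ (where $(A_{a_1}^2)_{\aa,\aa}=wt(a_1)=r$) one recovers precisely the three-distinct-eigenvalue / Proposition~\ref{prop1} characterization — I would state conditions $(i)$, $(ii)$ in the ``for any two pairs of vertices'' form given in the theorem to sidestep having to separately track the diagonal.
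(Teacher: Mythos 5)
Your proposal follows essentially the same route as the paper: decompose $f=a_0+2a_1$, observe that $f(\zz)\in\{2,3\}\iff a_1(\zz)=1$ and $f(\zz)\in\{1,2\}\iff (a_0\oplus a_1)(\zz)=1$, identify the neighbor counts in $(i)$ and $(ii)$ with the common-neighbor counts in the ordinary Cayley graphs of $a_1$ and $a_0\oplus a_1$, apply the Bernasconi--Codenotti correspondence, and finish with the known fact that $f$ is gbent over $\Z_4$ (for $n$ even) iff $a_1$ and $a_0\oplus a_1$ are both bent --- which is exactly what the paper cites from \cite{ST09,smgs}. Two small cautions on your optional self-contained version of that last step: the correct identity is $\cH_f^{(4)}(\uu)=\tfrac{1+i}{2}\cW_{a_1}(\uu)+\tfrac{1-i}{2}\cW_{a_0\oplus a_1}(\uu)$, so $|\cH_f^{(4)}(\uu)|^2=\tfrac12\bigl(\cW_{a_1}(\uu)^2+\cW_{a_0\oplus a_1}(\uu)^2\bigr)$, and your proposed ``clamp'' $\cW_{a_1}(\uu)^2\le 2^n$ is not true a priori for a non-bent function; the clean elementary argument is instead that $2^{n+1}$ has a unique representation as a sum of two integer squares, namely $2^n+2^n$, which forces both Walsh coefficients to have square $2^n$ at every $\uu$. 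Since you also offer to cite the known classification (as the paper does), the argument goes through.
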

\begin{proof}We know that $f=a_0+2a_1$, where $a_0,a_1\in\cB_n$, is gbent if and only if $a_1,a_1\+a_0$ are both bent (see~\cite{ST09,smgs}).
Let $\uu\in \V_n$. We have that:
\begin{enumerate}
\item $f(\uu)=0\Leftrightarrow a_0(\uu)=0,(a_1\+a_0)(\uu)=0$
\item $f(\uu)=1\Leftrightarrow a_0(\uu)=1,(a_1\+a_0)(\uu)=1$
\item  $f(\uu)=2\Leftrightarrow a_0(\uu)=0,(a_1\+a_0)(\uu)=1$
\item  $f(\uu)=3\Leftrightarrow a_0(\uu)=1,(a_1\+a_0)(\uu)=1$
\end{enumerate}

If $f$ is gbent, then $a_1,a_1\+a_0$ are both bent. Then, by \cite{BC1}, their respective graphs are srg with respective parameters $e=d,e'=d'$. We consider the following cases:
\begin{enumerate}
\item[$(a)$] Let any $\aa,\bb,\cc$ such that $f(\aa\+\cc)\in\{1,2\}$ and $f(\bb\+\cc)\in\{1,2\}$, then $(a_1\+a_0)(\aa\+\cc)=1=(a_1\+a_0)(\bb\+\cc)$. Since the graph corresponding to $a_1\+a_0$ is srg with $e'=d'$, then $|\{\cc:(a_1\+a_0)(\aa\+\cc)=1=(a_1\+a_0)(\bb\+\cc)\}|=e'$. Therefore, $|N_{\{1,2\}}(\aa,\bb)|=e'$.
\item[$(b)$] Let any $\aa,\bb,\cc$ such that $f(\aa\+\cc)\in\{2,3\}$ and $f(\bb\+\cc)\in\{2,3\}$, then $a_1(\aa\+\cc)=1=a_1(\bb\+\cc)$. Since the graph corresponding to $a_1$ is srg with $e=d$, then $|\{\cc:a_1(\aa\+\cc)=1=a_1(\bb\+\cc)\}|=e$. Therefore, $|N_{\{2,3\}}(\aa,\bb)|=e'$.
\end{enumerate}
Conversely, let the generalized Cayley graph be such that, for any two pairs of vertices $\{\aa,\bb\}$, $\{\cc,\dd\}$, %with $f(\aa\+\bb)\in\{0,1\}$, $f(\cc\+\dd)\in\{2,3\}$,
then   $|N_{\{2,3\}}(\aa,\bb)|=|N_{\{2,3\}}(\cc,\dd)|$, and $|N_{\{1,2\}}(\aa,\bb)|=|N_{\{1,2\}}(\cc,\dd)|$. As seen in the first part of the proof, $|N_{\{2,3\}}(\aa,\bb)|=|\{\cc:a_1(\aa\+\cc)=1=a_1(\bb\+\cc)\}|$. This number is a constant, regardless of the value of $a_1(\aa\+\bb)$. This implies that the Cayley graph corresponding to $a_1$ is srg with $e=d$, where $e=|N_{\{2,3\}}(\aa,\bb)|$.

Similarly, $|N_{\{1,2\}}(\aa,\bb)|=|\{\cc:(a_1\+a_0)(\aa\+\cc)=1=(a_1\+a_0)(\bb\+\cc)\}|$. This number is a constant, regardless of the value of  $(a_1\+a_0)(\aa\+\bb)$. This implies that the Cayley graph corresponding to $a_1\+a_0$ is srg with $e'=d'$, where $e'=|N_{\{1,2\}}(\aa,\bb)|$. Since both $a_1$ and $a_1\+a_0$ are therefore bent, we conclude that $f$ is gbent.
\end{proof}
%}}

It is not hard to show that in some instances a ``uniform'' strong regularity will hold.
\begin{cor}
Let $S$ be a bent set $($see \textup{\cite{BK08}}$)$, that is, every element of $S$ is a bent function and the sum of any two such is also a bent function. Let $a_0,a_1\in S$. Then, the generalized edge-weighted Cayley graph of $f=a_0+2a_1$ is $(X;\bar X)$-strongly regular for any $X$ with $|X|=2$.
\end{cor}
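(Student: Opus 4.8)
The plan is to handle the six two-element subsets $X\subseteq\Z_4$ by a single uniform argument that, in each case, reduces the strong regularity of $G_f$ to the Bernasconi--Codenotti correspondence applied to one of the three Boolean functions $a_0$, $a_1$, $a_0\+a_1$ (or to its binary complement). First I would record that, since $S$ is a bent set and $a_0,a_1\in S$, all three of $a_0$, $a_1$ and $a_0\+a_1$ are bent; moreover, adding the all-one function only negates the Walsh--Hadamard spectrum, so the complements $\bar a_0=a_0\+1$, $\bar a_1$ and $\overline{a_0\+a_1}$ are bent as well. By the Bernasconi--Codenotti correspondence, each of the Cayley graphs $G_{a_0},G_{a_1},G_{a_0\+a_1},G_{\bar a_0},G_{\bar a_1},G_{\overline{a_0\+a_1}}$ is strongly regular with $e=d$; equivalently, in each of them the number of common neighbours of two distinct vertices is a single constant, independent of whether the two vertices are adjacent.

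Next I would use the dictionary from the proof of Theorem~\ref{GB4}: $a_1(\uu)=1\Leftrightarrow f(\uu)\in\{2,3\}$, $a_0(\uu)=1\Leftrightarrow f(\uu)\in\{1,3\}$ and $(a_0\+a_1)(\uu)=1\Leftrightarrow f(\uu)\in\{1,2\}$, together with the complementary statements for the value $0$. The key observation is that the $\binom{4}{2}=6$ two-element subsets of $\Z_4$ are exactly
\[
\{2,3\},\quad\{0,1\},\quad\{1,3\},\quad\{0,2\},\quad\{1,2\},\quad\{0,3\},
\]
so every two-element subset $T$ of $\Z_4$ satisfies $\{\uu:f(\uu)\in T\}=\{\uu:g(\uu)=c\}$ for a unique pair $(g,c)$ with $g\in\{a_0,a_1,a_0\+a_1\}$ and $c\in\{0,1\}$.

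Now fix $X$ with $|X|=2$, set $\bar X=\Z_4\setminus X$, and let $(g,c)$ be the pair attached to $T=\bar X$. Then, for any two distinct vertices $\aa,\bb$ of $G_f$, the set $N_{\bar X}(\aa,\bb)$ of common neighbours $\cc$ with $w(\aa,\cc)\in\bar X$ and $w(\bb,\cc)\in\bar X$ satisfies
\[
N_{\bar X}(\aa,\bb)=\{\cc:\,f(\aa\+\cc)\in\bar X,\ f(\bb\+\cc)\in\bar X\}=\{\cc:\,g(\aa\+\cc)=c=g(\bb\+\cc)\},
\]
which is the set of common neighbours of $\aa$ and $\bb$ in $G_g$ when $c=1$ and in $G_{\bar g}$ when $c=0$ (using $g(\uu)=0\Leftrightarrow\bar g(\uu)=1$). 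By the first paragraph $|N_{\bar X}(\aa,\bb)|$ equals a fixed constant, independent of the pair $\{\aa,\bb\}$; taking $e_X=d_X$ to be that constant, and invoking Proposition~\ref{prop6} for the weighted regularity of $G_f$, we get that $G_f$ is $(X;\bar X)$-strongly regular, indeed ``uniformly'' in that the count does not even depend on whether $w(\aa,\bb)$ lies in $X$ or in $\bar X$.

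The one step that needs care is the case $c=0$, which occurs exactly when $0\notin X$: there one counts common \emph{non}-neighbours in $G_g$, and one must identify this with the common-neighbour count in $G_{\bar g}$ and recall that $\bar g$ is again bent (alternatively, use the complement formula for the srg parameters in Proposition~\ref{prop1}). This is also the place where the full bent-set hypothesis is used rather than just gbentness of $f$: being gbent forces only $a_1$ and $a_0\+a_1$ to be bent, whereas the cases $X=\{0,2\}$ and $X=\{1,3\}$ bring in $a_0$, so one genuinely needs $a_0\in S$ to be bent as well.
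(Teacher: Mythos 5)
The paper states this corollary without proof (only the remark that ``it is not hard to show''), and your argument is precisely the natural completion along the lines of the proof of Theorem~\ref{GB4}: each $2$-element subset of $\Z_4$ is a level set of one of $a_0$, $a_1$, $a_0\+a_1$, all of which (and their complements) are bent by the bent-set hypothesis, so the Bernasconi--Codenotti correspondence makes each count $|N_{\bar X}(\aa,\bb)|$ constant, and Proposition~\ref{prop6} supplies the weighted regularity. Your closing observation --- that the cases $X=\{0,2\}$ and $X=\{1,3\}$ require bentness of $a_0$ itself, which gbentness of $f$ alone does not give --- correctly identifies why the corollary needs the full bent-set hypothesis rather than being a special case of Theorem~\ref{GB4}.
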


\begin{rem}
One certainly could inquire whether a similar result holds for a gbent for $n$ odd.  Since the answer depends on a characterization (not currently known) of classical semibents in terms of their Cayley graphs, we leave that question for a subsequent project of an interested reader.
\end{rem}

While we cannot find a necessary and sufficient condition on a gbent in $\cGB_n^q$, $q=2^k$, we can follow a similar approach as in Theorem~\ref{GB4} to find a necessary condition on the Cayley graph of a generalized bent in $\cGB_n^q$. As in the previous result, for $X\subseteq \Z_q$ and two vertices $\uu,\vv$, let $N_X(\uu,\vv)$ be the set of vertices $\ww$ such that $f(\uu\+\ww)\in X$ and $f(\vv\+\ww)\in X$. As usual, $\bar \cc$ is the complement of the vector $\cc$, and for two vectors $\aa=(a_1,\ldots, a_t),\bb=(b_1,\ldots,b_t)$, the notation $\aa\preceq \bb$ means that $a_i\leq b_i$, for all $1\leq i\leq t$.
Recall that the  canonical injection  $\fun{\iota} {\V_s}{\Z_{2^s}}$,
$\iota(\cc) = \cc\cdot (1,2,\ldots,2^{s-1})=\sum_{j=0}^{s-1}c_j 2^j$, where $\cc=(c_0,c_1,\dots,c_{s-1})$.
%We also define the convolution $\cc\star \dd=(c_0d_0,\ldots,c_{k-2}d_{k-2})$, where $\cc=(c_0,c_1,\dots,c_{k-2})$ and $\dd=(d_0,c_1,\dots,d_{k-2})$.

\begin{thm}
Let $n$ be even, and $f=a_0+2a_1+\cdots+2^{k-1} a_{k-1}$, $k\geq 2$, $a_i\in\cB_n$, be a generalized Boolean function. If $f$ is gbent then the associated edge-weighted Cayley graph is $(X_\cc^0;X_\cc^1)$-strongly regular with $e_{X_\cc^0}=d_{X_\cc^0}$, where $X_\cc^i=\{ \iota(\tilde \cc)+\iota(\dd)\,:\,\tilde\cc\preceq (\cc,1), wt(\tilde \cc)\equiv i\pmod 2, \dd\preceq \bar \cc\}$, $i=0,1$,
for all $\cc\in\V_{k-1}$; that is, for all $\cc\in\V_{k-1}$, and for any two pairs of vertices $(\uu,\vv),(\xx,\yy)$,
%with $f(\uu\+\vv)\in X_\cc^1, f(\xx\+\yy)\in X_\cc^0$ we have
\[
\left|N_{X_\cc^1} (\uu,\vv) \right|=\left|N_{X_\cc^1} (\xx,\yy) \right|.
\]
\end{thm}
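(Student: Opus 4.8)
The plan is to mimic the proof of Theorem~\ref{GB4}, replacing the single pair $(a_1,\,a_0\op a_1)$ there by the whole family of ``signed digit sums'' $b_\cc:=a_{k-1}\op\bigoplus_{j:\,c_j=1}a_j$, $\cc\in\V_{k-1}$, and applying the Bernasconi--Codenotti correspondence to the ordinary Cayley graph $G_{b_\cc}$ of each $b_\cc$. Concretely, I would first dispose of the bookkeeping identifying the weight sets $X_\cc^i$ with the level sets of the linear form attached to $b_\cc$, then read off that $N_{X_\cc^1}(\uu,\vv)$ is nothing but the common neighbourhood of $\uu,\vv$ in $G_{b_\cc}$, and finally invoke the decomposition theory of gbent functions over $\Z_{2^k}$.

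First I would unpack $X_\cc^i$. Since every $a_i$ takes values in $\{0,1\}$, the integer $f(\xx)=\sum_{i=0}^{k-1}2^i a_i(\xx)$ lies in $\{0,\dots,2^k-1\}$ with binary digit $j$ equal to $a_j(\xx)$; there are no carries. For $v=\sum_j v_j 2^j\in\Z_{2^k}$, the constraints $\tilde\cc\preceq(\cc,1)$ and $\dd\preceq\bar\cc$ force $\mathrm{supp}(\tilde\cc)$ and $\mathrm{supp}(\dd)$ into the two complementary blocks $\{j<k-1:c_j=1\}\cup\{k-1\}$ and $\{j<k-1:c_j=0\}$ of $\{0,\dots,k-1\}$, so $\iota(\tilde\cc)+\iota(\dd)$ runs over all of $\Z_{2^k}$, each value once, as $(\tilde\cc,\dd)$ varies, with $wt(\tilde\cc)$ equal to $v_{k-1}+\sum_{j:\,c_j=1}v_j$. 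Hence
\[
X_\cc^i=\Bigl\{v\in\Z_{2^k}\ :\ v_{k-1}\op\bigoplus_{j:\,c_j=1}v_j=i\Bigr\},\qquad i=0,1,
\]
a bisection of $\Z_{2^k}$ into two sets of size $2^{k-1}$ (non-degenerate since the digit $v_{k-1}$ always occurs). Composing with the digit description of $f$, for every $\ww\in\V_n$ we get $f(\ww)\in X_\cc^1\iff b_\cc(\ww)=1$ with $b_\cc:=a_{k-1}\op\bigoplus_{j:\,c_j=1}a_j\in\cB_n$; therefore for any vertices $\uu,\vv$,
\[
N_{X_\cc^1}(\uu,\vv)=\{\ww\in\V_n:\ b_\cc(\uu\op\ww)=1=b_\cc(\vv\op\ww)\},
\]
which is exactly the set of common neighbours of $\uu$ and $\vv$ in the (unweighted) Cayley graph $G_{b_\cc}$.

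With this identification the theorem becomes the assertion that, for each $\cc\in\V_{k-1}$, any two distinct vertices of $G_{b_\cc}$ have the same number of common neighbours. Now I would invoke the decomposition characterization of generalized bent functions over $\Z_{2^k}$ (see~\cite{TXQF16}; the $k=2$ instance is the fact from~\cite{ST09,smgs} used in Theorem~\ref{GB4}): for $n$ even, if $f=\sum_i 2^i a_i$ is gbent then every $b_\cc$, $\cc\in\V_{k-1}$, is bent. By the Bernasconi--Codenotti correspondence each $G_{b_\cc}$ is then strongly regular with $e=d$, so the number of common neighbours of any two distinct vertices equals the single constant $e=d$, independent of the chosen pair. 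Feeding this back, $|N_{X_\cc^1}(\uu,\vv)|$ is independent of the (distinct-vertex) pair $(\uu,\vv)$, which --- together with the weighted regularity guaranteed by Proposition~\ref{prop6} and $|X_\cc^0|=|X_\cc^1|=2^{k-1}$ --- is exactly $(X_\cc^0;X_\cc^1)$-strong regularity with $e_{X_\cc^0}=d_{X_\cc^0}$.

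I expect two points to carry the weight. The routine-but-finicky one is the reduction of $X_\cc^i$ above: one must check that the order ideals of $(\cc,1)$ and of $\bar\cc$ sit in complementary coordinate blocks, so that no element of $\Z_{2^k}$ is produced twice and the parity $wt(\tilde\cc)$ translates cleanly into the stated linear condition. The substantive obstacle is the input ``$f$ gbent $\Rightarrow$ $b_\cc$ bent for all $\cc$'': for $k=2$ this is the cited equivalence, but for $k\ge 3$ it is only the \emph{necessary} half of the complete characterization of $\cGB_n^{2^k}$-gbent functions (the sufficient direction additionally requires relations among the duals $\widetilde{b_\cc}$, which is consistent with the present theorem being phrased as an implication rather than an equivalence). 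If one wished to prove this input from scratch, the natural route is to expand $\zeta_{2^k}^{f(\xx)}=\sum_{\cc\in\V_{k-1}}\gamma_\cc(-1)^{b_\cc(\xx)}$ with explicit constants $\gamma_\cc$ satisfying $\sum_\cc|\gamma_\cc|^2=1$, so that $\cH_f(\uu)=\sum_\cc\gamma_\cc\cW_{b_\cc}(\uu)$, and then deduce flatness of each $\cW_{b_\cc}$ from flatness of $\cH_f$; that last deduction is the genuinely delicate step.
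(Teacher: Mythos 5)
Your proposal is correct and follows essentially the same route as the paper: you decompose the gbent function into the component Boolean functions $f_\cc=a_{k-1}\op\bigoplus_{j:\,c_j=1}a_j$ (your $b_\cc$), identify $X_\cc^i$ with the level sets of $f_\cc$ via the digit expansion, invoke the decomposition theorem that gbentness forces each $f_\cc$ to be bent, and conclude constancy of $|N_{X_\cc^1}(\uu,\vv)|$ from the Bernasconi--Codenotti correspondence together with Proposition~\ref{prop6}. Your treatment is in fact somewhat more explicit than the paper's on the combinatorial identification of $X_\cc^i$ and on why only the necessary half of the characterization is needed for $k\ge 3$.
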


\begin{proof} The weighted regularity of $f$ follows from Proposition~\ref{prop6}. If $f$ is gbent then by \cite[Theorem 8]{mmms17}, we know that
for each $\cc\in\V_{k-1}$, the Boolean function $f_\cc$ defined
as
\[ f_\cc(\xx) =
c_0a_0(\xx)\+c_1a_1(\xx)\+\cdots\+c_{k-2}a_{k-2}(\xx)\+a_{k-1}(\xx) \]
is a bent function  with $ \cW_{f_\cc}(\aa)= (-1)^{\cc\cdot \iota^{-1}( g(\aa))+s(\aa)}2^{\frac n2}$, for some
$\fun {g}{\V_n}{\Z_{2^{k-1}}}$, $\fun {s}{\V_n}{\F_{2}}$.

While we cannot control in a simple manner the Walsh-Hadamard spectra conditions of $f_\cc$ on the Cayley graph of a gbent $f$, we can derive some necessary conditions for $f$ to be gbent. Let $\cc\in\V_{k-1}$ and $f_\cc$ bent. Consider $\uu\in\V_n$. Certainly, the condition that $f_\cc(\uu)=1$ means that an odd number of   functions $a_j$, occurring (that is, the corresponding coefficient is nonzero) in $f_\cc$ will output 1 at $\uu$. The $a_j$'s corresponding to entries that are 0 in $\cc$ can be taken either 0 or 1 (hence the condition in the definition of $X_\cc^i$ that $\dd\preceq \bar \cc$). We see that the set of values of $f$ when $f_\cc(\uu)=1$  is exactly $X_\cc^1=\{ \iota(\tilde \cc)+\iota(\dd)\,:\,\tilde\cc\preceq (\cc,1), wt(\tilde \cc)\equiv 1\pmod 2, \dd\preceq \bar \cc\}$. Similarly, the set of values for $f$ when $f_\cc(\uu)=0$
is $X_\cc^0=\{ \iota(\tilde \cc)+\iota(\dd)\,:\,\tilde\cc\preceq (\cc,1), wt(\tilde \cc)\equiv 0\pmod 2, \dd\preceq \bar \cc\}$.

Since $f_\cc$ is bent, then  any two vertices, $\uu,\vv$, will have the same number of adjacent $\ww$ with
$f_\cc(\uu\+\ww)=f_\cc(\vv\+\ww)=1$, regardless of the value of $f_\cc(\uu\+\vv)$. This implies that $\left|N_{X_\cc^1} (\uu,\vv) \right|$ is constant for all $\uu,\vv$.
\end{proof}
%}}

 \section{Further comments}

We follow the notation of \cite{CJMPW} and define yet another strong regularity concept here.  Let $\Gamma$ be an edge-weighted graph (with no loops) with vertices $V$, edges $E$, and weight set $W$ (in~\cite{CJMPW}, $W$ was taken to be $\mathbb{Z}_q^*$, although it could be arbitrary). As before, for each $\uu\in V$ and $a\in{W}\cup\{0\}$, the weighted $a$-neighborhood of $u$, $N_a(\uu)$, is defined as follows:
\begin{itemize}
\item $N_a(\uu)=$ the set of all neighbors $\vv$ of $\uu$ in $\Gamma$ for which the edge $(\uu,\vv)\in E$ has weight $a$ (for each $a\in W$).
\item $N^0(\uu)=$ the set of all nonadjacent $\vv$ of $\uu$ in $\Gamma$ (i.e., the set of $\vv$ such that $(\uu,\vv)\notin E$), that is, $N^0(\uu)=V\setminus \cup_{a\in W} N_a(\uu)$. In particular, $\uu\in N^0(\uu)$.
\end{itemize}

In \cite{CJMPW}, the following definition of weighted strongly regular graph is given.
 Let $\Gamma$ be a connected edge-weighted graph which is regular as a simple (unweighted) graph. Let $W$ be the set of edge-weights of $\Gamma$. The graph $\Gamma$ is called an {\em edge-weighted local strongly regular} (to distinguish it from our definition we inserted the adjective ``local'') with parameters $v,\,k=(k_a)_{a\in W},\,\lambda=(\lambda_a)_{a\in W^3}$, and $\mu=(\mu_a)_{a\in W^2}$, denoted $SRG_W(v,k,\lambda,\mu)$, if $\Gamma$ has $v$ vertices, and there are constants $k_a,\,\lambda_{a_1,a_2,a_3}$, and $\mu_{a_1,a_2}$, for $a,\,a_1,\,a_2,\,a_3\in W$, such that
 \[
 |N_a(\uu)|=k_a \text{ for all vertices }\uu,
 \]
  and for vertices $\uu_1\neq \uu_2$ we have
 \[
 |N_{a_1}(\uu_1)\cap N_{a_2}(\uu_2)|=
 \begin{cases}
  \lambda_{a_1,a_2,a_3} & \text{ if }  \exists\, a_3\in W \text{ with  }\uu_1\in N_{a_3}(\uu_2);\\
  \mu_{a_1,a_2}& \text{ if } \uu_1\notin N_{a_3}(\uu_2) \text{ for all } a_3.
 \end{cases}
 \]

 As was observed in \cite{CJMPW} for functions $f: \F_{p^n}\rightarrow \F_p$, where several questions were posed, it is not clear what the connection between this concept and generalized (or $p$-ary) bentness is. Our strong regularity definition does allow us to show such a connection and in the case $k=2$, we have a complete Bernasconi--Codenotti correspondence~\cite{BC1,BCV}.


\begin{thebibliography}{99}

%\bibitem{ADH}
%A. S. Asratian, T. M. J. Denley, R. H\"aggkvist,
%Bipartite Graphs and their Applications, Cambridge Univ. Press, 1998.

%\bibitem{AK92}
%E.F. Assmus, J.D. Key,
% Designs and their Codes, Cambridge Univ. Press, 1992.

\bibitem{BC1} A. Bernasconi, B. Codenotti,
{\em Spectral Analysis of Boolean Functions as a Graph Eigenvalue
Problem}, IEEE Trans. on Computers {\bf 48:3} (1999), 345--351.

\bibitem{BCV}
A. Bernasconi, B. Codenotti, J. M. VanderKam,
{\em A Characterization of Bent Functions in terms of Strongly Regular Graphs},
IEEE Trans. on Computers, {\bf 50:9} (2001), 984--985.

%\bibitem{BC2}
% A. Bernasconi, B. Codenotti,  {\em On Boolean Functions Associated to
%Bipartite Cayley Graphs}, Workshop on Boolean Problems  (2000), Freiberg - Germany, 167--174.

\bibitem{BK08}
C. Bey, G. M. Kyureghyan, {\em On Boolean functions with the sum of every two of them being bent}, Des.
Codes Cryptogr. 49 (2008), 341--346.

\bibitem{Biggs}
N. Biggs, Algebraic Graph Theory (2nd ed.),  Cambridge Univ. Press, 1993.

\bibitem{Bud14}
L. Budaghyan, Construction and Analysis of Cryptographic Functions, Springer-Verlag, 2014.

\bibitem{Bu62}
A. T. Butson,   {\em Generalized Hadamard matrices}, Proc. Amer. Math. Soc. 13 (1962), 894--898.

%\bibitem{Cam01}
%{P. Cameron},
% {\em Random strongly regular graphs?},
% {Electronic Notes in Discrete Math.} {\bf 10} (2001)
%  (ed. Jaroslav Nesetril, Marc Noy and Oriol Serra), Elsevier, Amsterdam.

%\bibitem{Cao}
%D. Cao, {\em Bounds On Eigenvalues And Chromatic Numbers},
%Linear Algebra and Its Applications {\bf 270} (1998), 1--13.

\bibitem{CJMPW}
{C. Celerier, D. Joyner, C. Melles, D. Phillips,  S. Walsh}, {\em Edge-weighted Cayley Graphs and p-ary Bent Functions}, Integers {\bf 16}, A35, 2016.
%\bibitem{CD96}
%{C.J. Colbourn, J. Dinitz} (editors), CRC Handbook of Combinatorial Design,
%CRC Press, Boca Raton, 1996.

\bibitem{CS17} T. W.~Cusick, P.~St\u anic\u a,
{Cryptographic Boolean Functions and Applications}, 2nd Ed. (Academic Press, San Diego, CA,  2017); 1st Ed., 2009.

\bibitem{CDS} D. M. Cvetkovic, M. Doob, H. Sachs,
 Spectra of Graphs, Academic Press, 1979.

%\bibitem{DSV}
%G. Davidoff, P. Sarnak, A. Valette,
%Elementary Number Theory, Group Theory and Ramanujan Graphs, Cambridge Univ. Press, 2003.

%\bibitem{Di72}
%J.F. Dillon.
%{\em A survey of bent functions.} The NSA Technical Journal (unclassified) (1972), 191--215.


%\bibitem{gps} S. Gangopadhyay, E. Pasalic, P. St\u anic\u a, {\em A note on generalized bent criteria for Boolean functions}, IEEE Trans. Inform. Theory 59:5 (2013),  3233--3236.

%\bibitem{HP} S. Hod\u zi\'c, E. Pasalic, {\em Generalized bent functions -- Some general construction methods and
%related necessary and sufficient conditions}, Cryptogr. Commun. 7 (2015), 469--483.

%\bibitem{H-thesis} W.H. Haemers,
%{\em Eigenvalue techniques in design and graph theory,}
%Ph.D. Thesis at Eindhoven Univ. of Technology, 1979 (also found in Math. Centre Tract 121, Mathematical
%Centre, Amsterdam, 1980).

%\bibitem{Hof}
%A.J. Hoffman, {\em On eigenvalues and colorings of graphs}.
%Graph Theory and its Applications (B. Harris, ed.) (1970), Academic Press,  78--91.

%\bibitem{JK03}
%{L. K. J\"orgensen, M. Klin},
%{\em Switching of edges in strongly regular graphs. I.
%A family of partial difference sets on $100$ vertices},
%{Electronic J. Combin.} {\bf 10} (2003), \# R17.

%\bibitem{Lis11}
%P. Lison\v ek, {\em An Efficient Characterization of a Family of Hyperbent Functions},
%IEEE Trans. Inform. Theory 57:9 (2011),  6010--6014.

\bibitem{mmms17}
T. Martinsen, W. Meidl, S. Mesnager, P. St\u anic\u a,
{\em Decomposing generalized bent and hyperbent functions}, to appear in IEEE Trans. Inform. Th., 2017; DOI: 10.1109/TIT.2017.2754498.

\bibitem{mms0} T. Martinsen, W. Meidl, P. St\u anic\u a, {\em Generalized bent functions and their Gray images}, In: Duquesne S., Petkova-Nikova S. (eds), Arithmetic of Finite Fields. WAIFI 2016, LNCS 10064. Springer, 160--173.

\bibitem{mms} T. Martinsen, W. Meidl, P. St\u anic\u a, {\em Partial spread and vectorial generalized bent functions},   Des. Codes Crypt.  85:1 (2017), 1--13.

%\bibitem{Mei} W. Meidl, {\em A secondary construction of bent functions, octal gbent functions and their duals}, to appear in Mathematics and Computers in Simulation, 2017.

\bibitem{MesnagerBook} S. Mesnager, {Bent functions: fundamentals and results}, Springer Verlag, 2016.

%\bibitem{pp} M. G. Parker, A. Pott, {\em On Boolean functions which are bent and negabent}, In: Sequences, subsequences, and consequences, LNCS 4893, Springer, Berlin, 2007, 9--23.

%\bibitem{PTFL} A. Pott, Y. Tan, T. Feng, S. Ling, {\em Association schemes arising from
%bent functions}, Des. Codes Crypt. 59 (2011), 319--331.

\bibitem{ST09} P.~Sol\'e, N.~Tokareva, {\em Connections between Quaternary and Binary Bent Functions},    Prikl. Diskr. Mat. 1 (2009), 16--18 (see also, http://eprint.iacr.org/2009/544.pdf).

%\bibitem{sgcgm} P. St\u anic\u a, S. Gangopadhyay, A. Chaturvedi, A.K. Gangopadhyay, S. Maitra, {\em Investigations on bent and negabent functions via the nega-Hadamard transform},
%IEEE Trans. Inform. Theory 58:6 (2012), 4064--4072.

\bibitem{smgs} P. St\u anic\u a, T. Martinsen, S. Gangopadhyay, B. K. Singh, {\em Bent and generalized bent Boolean functions}, Des. Codes Crypt. 69 (2013), 77--94.

%\bibitem{spt} W. Su, A. Pott, X. Tang, {\em Characterization of negabent functions and construction of bent-negabent functions with maximum algebraic degree}, IEEE Trans. Inform. Theory 59:6 (2013),  3387--3395.

\bibitem{TXQF16}
C. Tang, C. Xiang, Y. Qi, K. Feng,
{\em Complete characterization of generalized bent and $2^k$-bent Boolean functions},
 IEEE Trans. Inform. Th.  63:7 (2017), 4668--4674.

\bibitem{Tok} N. Tokareva, {\em Generalizations of bent functions: a survey of publications}, (Russian) Diskretn. Anal. Issled. Oper. 17 (2010), no. 1, 34--64; translation in J. Appl. Ind. Math. 5:1 (2011),  110--129.

\bibitem{Tok15} N. Tokareva, Bent Functions, Results and Applications to Cryptography, Academic Press, San Diego, CA,  2015.



%\bibitem{McK01}
%{B.D. McKay, E. Spence},
%{\em Classification of regular two-graphs on $36$ and $38$ vertices},
%{ Australas. J. Combin.} {\bf 24} (2001), 293--300.
%
%\bibitem{MS77}
%F. J. Mac Williams, N. J. Sloane. The theory of error-correcting codes, Amsterdam,
%North Holland, 1977.
%
%\bibitem{MS99}
%W. Meier, O. Staffelbach.
%{\em Nonlinearity Criteria for Cryptographic Functions},
%Advances in Cryptology, EUROCRYPT� 89, LNCS
%{\bf 434} (1990), 549--562.
%
%\bibitem{OSW}
%J. D. Olsen, R. A. Scholtz and L. R. Welch, {\em Bent function sequences}, IEEE Trans.
%on Inf. Theory {\bf 28} (1982), 858--864.
%
%\bibitem{Sh59}
%{S.S. Shrikhande}, {\em The uniqueness of the $L_2$ association scheme},
%{Ann. Math. Statistics} {\bf 30} (1959), 781-�798.
%
%\bibitem{T1}
%Y. Tarannikov. {\em Spectral analysis of high order correlation immune functions}, 2000.
%
%\bibitem{T2}
%Y. Tarannikov, P. Korolev, A. Botev.
%{\em Autocorrelation coefficients and correlation immunity of Boolean functions},
%Advances in cryptology, ASIACRYPT 2001,
%LNCS {\bf 2248} (2001), 460--479.

%\bibitem{Wilf}
%H. S. Wilf. {\em The eigenvalues of a graph and its chromatic number},
%J. London Math. Soc. {\bf 42} (1967), 330--332.


\end{thebibliography}
\end{document}